\newtheorem{rem}{Remark}
\newtheorem{prop}{Proposition}
\newtheorem{thm}{Theorem}
\newtheorem{defn}{Definition}
\newtheorem{exmp}{Example}
\newcommand{\arsto}{\rightsquigarrow}
\newcommand{\lto}{\leftarrow}
\newcommand{\dom}{\mathrm{Dom}}
\newcommand{\src}{\mathtt{src}} 
\newcommand{\tgt}{\mathtt{tgt}} 
\newcommand{\SkGr}{\mathbf{S}_{Gr}} 
\newcommand{\SkGrpol}{\mathbf{S}_{Gr}^\pm} 
\newcommand{\congto}{\stackrel{\simeq}{\rightarrow}}
\newcommand{\ol}{\overline}
\newcommand{\ti}{\widetilde} 
\newcommand{\rupto}[1]{\stackrel{#1}{\rightarrow}}
\newcommand{\lupto}[1]{\stackrel{#1}{\leftarrow}}
\newcommand{\longrupto}[1]{\stackrel{#1}{\longrightarrow}}
\newcommand{\spa}[5]{#1\stackrel{#2}{\leftarrow}#3\stackrel{#4}{\rightarrow}#5} 
\newcommand{\bN}{\mathbb{N}} 
\newcommand{\id}{\mathrm{id}}  
\newcommand{\nod}[1]{|#1|} 
\newcommand{\edg}[1]{{#1}_{\rightarrow}} 
\newcommand{\grpol}[1]{\mathbb{#1}}   
\newcommand{\edgnp}[3]{{#1}_ {#2\rightarrow#3}} 
\newcommand{\Set}{\mathbf{Set}} 
\newcommand{\Gr}{\mathbf{Gr}} 
\newcommand{\Grpol}{\mathbf{Gr^\pm}} 
\newcommand{\Grlab}{\mathbf{LGr}} 
\newcommand{\Grpolab}{\mathbf{LGr^\pm}} 
\newcommand{\lset}{\mathcal{L}} 
\newcommand{\lfun}{\mathrm{lab}}  
\newcommand{\parto}{\rightharpoonup} 
\newcommand{\catversion}{\mathcal{C}}
\newcommand{\setversion}{\mathcal{A}} 
\newcommand{\step}[3]{\xymatrix{#1 \ar@{=>}[r]^{\mathrm{#2}} & #3 \\ } }
\newcommand{\stepp}[4]{\xymatrix{#1 \ar@{=>}[r]^{\mathrm{#2}}_{#3} & #4 \\ } }
\newcommand{\catC}{\mathcal{M}}
\newcommand{\catL}{\mathcal{M}_L}
\newcommand{\catR}{\mathcal{M}_R}
\newcommand{\catP}{\mathcal{P}}
\newcommand{\funL}{\mathcal{L}} 
\newcommand{\funR}{\mathcal{R}} 
\newcommand{\funS}{\mathcal{S}}
\newcommand{\rs}{\mathrm{RS}} 
\newcommand{\norm}{\mathcal{N}}
\newcommand{\nolab}{\circ} 
\newcommand{\nodti}[1]{\nod{\ti{#1}}} 
\newcommand{\nodol}[1]{\nod{\ol{#1}}} 
\newcommand{\psqpo}{PSqPO\xspace}
\newcommand{\arpsqpo}{psqpo}
\newcommand{\depol}{\mathrm{Depol}}
\newcommand{\polar}{\mathrm{Pol}}
\title{Graph rewriting with polarized cloning} 
\author{D. Duval\inst{1} \and R. Echahed\inst{2} \and F. Prost\inst{2}}
\institute{ LJK\\
            B. P. 53,  F-38041 Grenoble, France \\
           \email{Dominique.Duval@imag.fr} \and
            LIG\\
            46, av F\'elix Viallet,
            F-38031 Grenoble, France\\
            \email{Rachid.Echahed@imag.fr}/
            \email{Frederic.Prost@imag.fr}}
\begin{document}

\maketitle

\begin{abstract}
  We tackle the problem of graph transformation with a particular
  focus on \emph{node cloning}. We propose a new approach to graph rewriting 
  where nodes can be cloned zero, one or more times. A node can be
  cloned together with all its incident edges, with only its outgoing
  edges, with only its incoming edges or with none of its incident
  edges. We thus subsume previous works such as the sesqui-pushout,
  the heterogeneous pushout and the adaptive star grammars approaches.
  A rewrite rule is defined as a span where the right-hand and
  left-hand sides are graphs while the interface is a polarized graph.
  A polarized graph is a graph endowed with some annotations  on nodes.
  The way a node is cloned is indicated by its polarization annotation.  We use
  these annotations for designing graph transformation with polarized
  cloning. We show how a clone of a node can be built according to the
  different possible polarizations and define a rewrite step as a
  final pullback complement followed by a pushout.  This is called the
  \emph{polarized sesqui-pushout} approach.
  We also provide an algorithmic presentation of the proposed graph
  transformation with polarized cloning.
\end{abstract}

\section{Introduction}

Graph transformation \cite{handbook1,handbook2,handbook3} extends string 
rewriting \cite{BookF93} and term rewriting \cite{BaaderN98} in several 
respects. In the literature, there are
many ways to define graphs and  graph rewriting.  
The proposed approaches  can
be gathered in two main streams: (i) the algorithmic approaches,
which define a graph rewrite step by means of the algorithms involved
in the implementation of graph transformation (see
e.g. \cite{BVG87,Echahed08b}); 
(ii) the second stream consists of the algebraic approaches, 
first proposed in the seminal paper \cite{EhrigPS73}, and which use 
categorical constructs to define graph transformation in an
abstract way. The most popular algebraic approaches are the double
pushout (DPO) and the single pushout (SPO) approaches, which can be
illustrated as follows:
$$ \begin{array}{ccc}
  \xymatrix@C=4pc@R=1.5pc{
  L \ar[d]_{m} & K \ar[l]_{l} \ar[d]^{d}  \ar[r]^{r} 
  & R \ar[d]^{h} \\ 
  G 
    \ar@{-}[]+R+<+6pt,+1pt>;[]+RU+<+6pt,+6pt> 
    \ar@{-}[]+U+<+1pt,+6pt>;[]+RU+<+6pt,+6pt> 
  & D \ar[l]_{l_1} \ar[r]^{r_1} 
  & H 
    \ar@{-}[]+L+<-6pt,+1pt>;[]+LU+<-6pt,+6pt> 
    \ar@{-}[]+U+<-1pt,+6pt>;[]+LU+<-6pt,+6pt> 
  \\ 
  } &  \hspace{5mm} & 
  \xymatrix@C=6pc@R=1.5pc{
  L \ar[d]_{m} \ar[r]^{r}& R  \ar[d]^{h}  \\ 
  G  \ar[r]^{r_1} & H 
    \ar@{-}[]+L+<-6pt,+1pt>;[]+LU+<-6pt,+6pt> 
    \ar@{-}[]+U+<-1pt,+6pt>;[]+LU+<-6pt,+6pt> 
  \\ 
  } \\
  \mbox{(A) Double pushout:} \step{G}{dpo}{H} && 
  \mbox{(B) Single pushout:} \step{G}{spo}{H} \\ 
\end{array} $$

As usual, a \emph{span} is a pair of morphisms with the same source.
In the DPO approach \cite{EhrigPS73,CorradiniMREHL97}, a rule $\rho$ is
defined as a span $\rho = L\leftarrow K \rightarrow R$
and a matching is a morphism $m : L \rightarrow G$. 
A graph $G$ rewrites into a graph $H$ 
using rule $\rho$ and matching $m$ 
if and only if there exist graph morphisms $d,h,l_1,r_1$ such that 
the left and the right
squares in the diagram (A) above are pushouts. 
In general, $D$ is not unique, and conditions may be given in order to
ensure its existence, such as dangling and identification conditions 
(see e.g., \cite{CorradiniMREHL97}). 
Since graph morphisms are completely defined,  
the DPO approach is easy
to grasp, but in some cases this approach fails to specify rules with
deletion of nodes, as witnessed by the following example.
Let us
consider the reduction of the term $f(a)$ by means of the (term rewrite) rule $f(x)
\to f(b)$.  This rule can be translated into a span $f(x) \leftarrow K
\rightarrow f(b)$ for some graph $K$. When applied to $f(a)$, because
of the pushout properties, the constant $a$ must appear in $D$, hence
in $H$, although $f(b)$ is the only desired result for $H$.

In the SPO approach \cite{Rao84,Kennaway87,Lowe93,EhrigHKLRWC97}, a
rule $\rho$ is a \emph{partial} graph morphism $\rho = L \to R$
and a matching is a total morphism $m : L \rightarrow G$.
A graph $G$ rewrites into a graph $H$ using rule $\rho$ and matching $m$ 
if and only if there exist graph morphisms $h$, $r_1$ such that the 
square in the diagram (B) above is a pushout.  
This approach is appropriate to specify deletion of nodes
thanks to partial morphisms. Deletion of a node causes automatically
the deletion of all its incident edges.

In this paper we are interested in graph transformation with a particular focus
on node \emph{cloning}.  
Informally, by cloning a node $n$, we mean making zero, one or more copies of
$n$ with ``some'' of its incident edges.
Roughly speaking, each copy of $n$ can be made either with all the incident edges of
$n$, with only its outgoing edges, with only its
incoming edges, or without any of its incident edges.

Cloning a term is very common in the setting of term rewriting
systems.  Consider the rule $f(x) \rightarrow g(x,x)$. This rule
copies twice the instance of $x$ when applied over first-order
terms. In the area of graph transformation, the considered rule can be
intuitively written (following the algorithmic approach) as $f(x)
\rightarrow g(p: x, p)$ where $x$ is not cloned twice but shared
\cite{Plu98a,EcJ98a07}.  If this kind of sharing,
which is one of the main features of graph transformation, can be of
great interest in several areas such as efficient implementations of
declarative programming languages \cite{PlE93}, cloning of
nodes is another important feature which may ease graph
transformation in many real-life applications.

The classical DPO and  SPO approaches of graph transformation are
clearly not well suited to perform cloning of nodes. As far as we are
aware of, there are two algebraic attempts to deal with cloning:
the sesqui-pushout approach (SqPO) \cite{CorradiniHHK06}
and the heterogeneous pushout approach (HPO) \cite{DuvalEP09}. 
They can be illustrated as follows:
$$ \begin{array}{ccc}
  \xymatrix@C=4pc@R=1.5pc{
  L \ar[d]^{m} & K \ar[l]_{l} \ar[r]^{r} \ar[d]^{d} 
    \ar@{-}[]+L+<-6pt,-1pt>;[]+LD+<-6pt,-6pt> 
    \ar@{-}[]+D+<-1pt,-6pt>;[]+LD+<-6pt,-6pt>  
  & R \ar[d]^{h}\\
  G & D \ar[l]_{l_1} \ar[r]^{r_1} 
    \ar@{-}[]+L+<-6pt,+1pt>;[]+LU+<-6pt,+6pt> 
    \ar@{-}[]+L+<-8pt,+1pt>;[]+LU+<-8pt,+8pt> 
    \ar@{-}[]+U+<-1pt,+6pt>;[]+LU+<-6pt,+6pt> 
    \ar@{-}[]+U+<-1pt,+8pt>;[]+LU+<-8pt,+8pt> 
  & H 
    \ar@{-}[]+L+<-6pt,+1pt>;[]+LU+<-6pt,+6pt> 
    \ar@{-}[]+U+<-1pt,+6pt>;[]+LU+<-6pt,+6pt> 
  \\
  } &  \hspace{0mm} & 
  \xymatrix@C=6pc@R=1.5pc{
  L \ar@{-->}[r]_{\tau} \ar[d]_{m} & 
      R \ar[d]^{d} \ar@{--_{>}}@/_1ex/@<-1ex>[l]_{\sigma} \\
  G \ar@{-->}[r]_{\tau_1} & 
      H \ar@{--_{>}}@/_1ex/@<-1ex>[l]_{\sigma_1} \\ 
  } \\
  \mbox{Sesqui-pushout:} \step{G}{sqpo}{H} &&
  \mbox{Heterogeneous pushout:} \step{G}{hpo}{H} \\ 
\end{array} $$

In \cite{CorradiniHHK06}, Corradini et al.
propose the sesqui-pushout approach where a rewrite rule is a span
$L\leftarrow K \rightarrow R$ as in the DPO approach, and a rewrite step 
is obtained nearly as in the DPO approach, 
but the left square is a final pullback complement
instead of a pushout complement. 
The sesqui-pushout approach has the ability to
clone nodes with all their incident edges.

In the heterogeneous pushout approach we have presented in \cite{DuvalEP09}, 
a rewrite rule is defined as a
tuple $\rho=(L,R,\tau,\sigma)$ such that the left-hand side $L$ and the
right-hand side $R$ are graphs, 
$\tau$ is a mapping from the nodes
of $L$ to the nodes of $R$ ($\tau$ does not need to be a graph morphism) and
$\sigma$ is a partial function from the nodes of $R$ to the nodes of $L$. 
A matching is a morphism of graphs $m : L \rightarrow G$. 
A graph $G$ rewrites into a graph $H$ using rule $\rho$ and matching $m$ 
if and only if the above diagram is a heterogeneous pushout 
as defined in \cite{DuvalEP09}.
Roughly speaking, this means that when $\tau(p)= n$ the incoming edges of
$p$ are redirected towards $n$ and when $\sigma(n) = p$ 
the node $n$ is instantiated as $p$ (parameter passing or cloning). 
In this approach, a node may be cloned with all its outgoing edges (and not
with all its incident edges as in the sesqui-pushout approach). 

In this paper, we introduce the notion of polarized graphs in order to perform
node cloning. Informally, 
we define a polarized graph $\grpol{F}$ as a graph $F$ where each node
$n$ is annotated as $n^+$, $n^-$, $n^{\pm}$ or just $n$.  The rules in
our approach are made of a polarized graph $\grpol{K}$, consisting of
a graph $K$ with annotated nodes, and a span of graphs
$\spa{L}{l}{K}{r}{R}$.  Such a rule is written
$\spa{L}{l}{\grpol{K}}{r}{R}$.  Intuitively, the annotations in
$\grpol{K}$ indicate the polarized cloning actions as follows~: $n^+$
means that the node $n$ is cloned together with all its outgoing
edges, while $n^-$ means that $n$ is cloned together with all its
incoming edges. Then, $n^{\pm}$ means that $n$ is cloned together with
all its incident edges. A node $n$ without any annotation means that
$n$ is cloned without its incident edges. Thus,
polarization of nodes provides more flexible ways to perform node cloning 
 when
compared to the SqPO \cite{CorradiniHHK06} and the HPO \cite{DuvalEP09} approaches.  This
new approach is called \emph{polarized sesqui-pushout} (\psqpo for
short).  A rewrite step in the \psqpo approach has nearly the same
shape as in the SqPO approach:
$$ \begin{array}{c}
\xymatrix@C=4pc@R=1.5pc{ L \ar[d]^{m} & \grpol{K} \ar[l]_{l}
\ar[r]^{r} \ar[d]^{d} 
\ar@{-}[]+L+<-6pt,-1pt>;[]+LD+<-6pt,-6pt>
\ar@{-}[]+D+<-1pt,-6pt>;[]+LD+<-6pt,-6pt> & R \ar[d]^{h}\\ G &
\grpol{D} \ar[l]_{l_1} \ar[r]^{r_1} 
\ar@{-}[]+L+<-6pt,+1pt>;[]+LU+<-6pt,+6pt>
\ar@{-}[]+L+<-8pt,+1pt>;[]+LU+<-8pt,+8pt>
\ar@{-}[]+U+<-1pt,+6pt>;[]+LU+<-6pt,+6pt>
\ar@{-}[]+U+<-1pt,+8pt>;[]+LU+<-8pt,+8pt> 
& H 
\ar@{-}[]+L+<-6pt,+1pt>;[]+LU+<-6pt,+6pt>
\ar@{-}[]+U+<-1pt,+6pt>;[]+LU+<-6pt,+6pt> \\ } \\ 
\mbox{Polarized sesqui-pushout:} \step{G}{\arpsqpo}{H} \\
\end{array} $$
where the left square is obtained from 
a final pullback complement 
in the category of polarized graphs
and the right square is a pushout  
in the category of graphs
(when the polarizations of $\grpol{K}$ and $\grpol{D}$ are forgotten). 

The paper is organized as follows. 
Section~\ref{sec:grapol} introduces the notion of polarized graphs.
Then Section~\ref{sec:rew} provides both 
an algorithmic and a categorical definition 
of graph rewriting with polarized sesqui-pushout (\psqpo). 
Our approach is adapted to labeled graphs in Section~\ref{sec:lab}.
Finally, a comparison with related
work is made in Section~\ref{sec:rel}. 
Detailed proofs can be found in the Appendix. 
We use categorical notions 
which may be found for instance in \cite{MacLane}.

\section{Graphs and polarized graphs}
\label{sec:grapol} 

In this section, we define the two kinds of graphs used in this paper:
``ordinary'' graphs in Section~\ref{subsec:gra} and polarized graphs 
in Section~\ref{subsec:pol}. 

\subsection{Graphs}
\label{subsec:gra}

\begin{defn}
\label{defi:gra-cat}
A \emph{graph} $X$ is made of a set of \emph{nodes} $\nod{X}$, a set
of \emph{edges} $\edg{X}$ and two functions \emph{source} and
\emph{target} from $\edg{X}$ to $\nod{X}$.  An edge $e$ with source
$n$ and target $p$ is denoted $n\rupto{e}p$.  The set of edges from $n$
to $p$ in $X$ is denoted $\edgnp{X}{n}{p}$.  A \emph{morphism} of
graphs $f:X\to Y$ is made of two functions (both denoted $f$)
$f:\nod{X}\to\nod{Y}$ and $f:\edg{X}\to\edg{Y}$, such that
$f(n)\rupto{f(e)}f(p)$ for each edge $n\rupto{e}p$.  This provides the
category $\Gr$ of graphs.
\end{defn}

In order to build large graphs from smaller ones, we will use the
\emph{sum} of graphs and the \emph{edge-sum} for
adding edges to a graph, as defined below using the 
symbol $+$ for the coproduct in the category of sets, 
i.e., the disjoint union of sets. 

\begin{defn}
\label{defi:gra-sum}
Given two graphs $X_1$ and
$X_2$, the \emph{sum} $X_1+X_2$ is the coproduct of $X_1$ and $X_2$
in the categry of graphs, which means that 
$\nod{X_1+X_2}=\nod{X_1}+\nod{X_2}$ and
$\edg{(X_1+X_2)}=\edg{X_1}+\edg{X_2}$ and the source and target
functions for $X_1+X_2$ are induced by the source and target functions
for $X_1$ and for $X_2$.  Given two graphs $X$ and $E$ such that
$\nod{E}\subseteq\nod{X}$, the \emph{edge-sum} $X+_eE$ is 
the pushout, in the category of graphs, of $X$ and $E$ 
over their common subgraph made of the nodes of $E$ and no edge.
This means that 
$\nod{X+_eE}=\nod{X}$ and $\edg{(X+_eE)}=\edg{X}+\edg{E}$
and the source and target functions for $X+_e E$ are induced by the
source and target functions for $X$ and for $E$.  
\end{defn}

Clearly, the precise
set of nodes of $E$ does not matter in the construction
of $X+_eE$, as long as
it contains the source and target of every edge of $E$ and is
contained in $\nod{X}$.  This notation is extended to morphisms: let
$f_1:X_1\to Y_1$ and $f_2:X_2\to Y_2$, then $f_1+f_2:X_1+X_2\to
Y_1+Y_2$ is defined piecewise from $f_1$ and $f_2$.  Similarly, let
$f:X\to Y$ and $g:E\to F$ with $\nod{E}\subseteq\nod{X}$ and
$\nod{F}\subseteq\nod{Y}$, then $f+_eg:X+_eE\to Y+_eF$ is defined as
$f$ on the nodes and piecewise from $f$ and $g$ on the edges.

\begin{rem}
\label{rem:gra-sum}
Let $X$ be a subgraph of a graph $Y$.
Let $\ol{X}$ denote the subgraph of $Y$ induced by the nodes outside $\nod{X}$ 
and $\ti{X}$ the subgraph of $Y$ induced by the edges 
which are neither in $X$ nor in $\ol{X}$, that is, the edges that are
incident to a node (at least) in $X$ but do no belong to $X$.
For all nodes $n$, $p$ in $Y$ let 
$\edgnp{\ti{X}}{n}{p}$ denote the subgraph of $Y$ induced by the 
edges from $n$ to $p$ in $\ti{X}$
(so that $\edgnp{\ti{X}}{n}{p}$ is empty whenever both $n$ and $p$ are 
in $\ol{X}$). 
Then $Y$ can be expressed as
  $ Y = (X+\ol{X}) +_e \ti{X} \quad\mbox{ with }\quad
  \edg{\ti{X}} = \sum_{n\in \nod{Y},p\in \nod{Y}} \edgnp{\ti{X}}{n}{p} $ 
which can also be written as
  $ \nod{Y} = \nod{X}+\nod{\ol{X}}$ and
  $ \edg{Y} = \edg{X}+\edg{\ol{X}} + 
               \sum_{n\in \nod{Y},p\in \nod{Y}} \edgnp{\ti{X}}{n}{p}$ 
\end{rem} 

In this paper, we use the following notion of graph \emph{matching}.
\begin{defn}
\label{defi:gra-match}
A \emph{matching of graphs} is a monomorphism of graphs. 
\end{defn}

It is easy to check that a morphism of graphs is a matching if and only if 
it is \emph{injective}, in the sense that 
both underlying functions (on nodes and on edges) are injections.  
So, up to isomorphism, every matching of graphs is an inclusion.
For simplicity of notations, we now assume that all matchings of graphs are inclusions.

\subsection{Polarized graphs}
\label{subsec:pol}

A polarized graph is a graph where every node may be polarized in the sense
that it may be marked either with a ``$+$'', with a ``$-$'', with both
``$\pm$'' or with no mark.  The polarizations will be used as cloning
instructions (in Section~\ref{subsec:alg}): roughly speaking, a node
$n^+$ is cloned with its outgoing edges, $n^-$ is cloned with its incoming
edges, $n^\pm$ is cloned  with all its incident edges and $n$ is cloned without any of its
incident edges. Moreover, every edge in a polarized graph  has
its source node (resp. target node) marked with ``$+$'' or ``$\pm$''
(resp. ``$-$'' or ``$\pm$'').

\begin{defn}
\label{defi:pol-cat}
A \emph{polarization} $X^\pm$ of a graph $X$ is a pair $X^\pm =
(\nod{X}^+,\nod{X}^-)$ of subsets of $\nod{X}$.  A node $n$ may be
denoted $n^+$ if it is in $\nod{X}^+$, $n^-$ if it is in $\nod{X}^-$
and $n^\pm$ if it is in $\nod{X}^+ \cap \nod{X}^-$.  A node $n$ of $\nod{X}$
is called \emph{neutral} if and only if 
it does not belong to $\nod{X}^+ \cup \nod{X}^-$. A \emph{polarized
  graph} $\grpol{X}=(X,X^\pm)$ is a graph $X$ together with a
polarization $X^\pm$ of $X$ such that the source of each edge $e$ of
$\edg{X}$ is in $\nod{X}^{+}$ and the target of $e$ is in
$\nod{X}^{-}$.  A \emph{morphism} of polarized graphs $f: \grpol{X}
\to \grpol{Y}$, where $\grpol{X}=(X,X^\pm)$ and $\grpol{Y}=(Y,Y^\pm)$,
is a morphism of graphs $f:X\to Y$ such that
$f(\nod{X}^+)\subseteq\nod{Y}^+$ and $f(\nod{X}^-)\subseteq\nod{Y}^-$.
This provides the category $\Grpol$ of polarized graphs.
\end{defn}

\begin{defn}
Given two polarized graphs $\grpol{X}_1$ and $\grpol{X}_2$, their
\emph{sum} is the polarized graph $\grpol{X}_1+\grpol{X}_2$ made of
the graph $X_1+X_2$ with the polarization $\nod{X_1+X_2}^+ =
\nod{X_1}^+ + \nod{X_2}^+$ and $\nod{X_1+X_2}^- = \nod{X_1}^- +
\nod{X_2}^-$. Given two polarized graphs $\grpol{X}$ and $\grpol{E}$
such that $\nod{E}\subseteq\nod{X}$, $\nod{E}^+\subseteq\nod{X}^+$ and
$\nod{E}^-\subseteq\nod{X}^-$, their \emph{edge-sum} is the polarized
graph $\grpol{X}+_e\grpol{E}$ made of the graph $X+_eE$ with the
polarization $\nod{X+_eE}^+=\nod{X}^+$ and $\nod{X+_eE}^-=\nod{X}^-$.
\end{defn}

\begin{defn}
\label{defi:pol-match}
A \emph{matching} of polarized graphs is a monomorphism $f:\grpol{X}
\to \grpol{Y}$ such that $f(\nod{X}^+)= f(\nod{X})\cap \nod{Y}^+$ and 
$f(\nod{X}^-)= f(\nod{X})\cap \nod{Y}^-$ (we say that $f$ \emph{strictly preserves} the polarization).
\end{defn}

It is clear from this definition that 
a matching of polarized graphs is 
a matching of graphs which strictly preserves the polarization.
We now assume that all matchings of polarized graphs are inclusions,
which is the case up to isomorphism. 

\begin{rem}
\label{rem:polgra-sum}
Let $f:\grpol{X}\to \grpol{Y}$ be a matching of polarized graphs.  
Analogously to Remark~\ref{rem:gra-sum},
using the fact that $f$ strictly preserves the polarization, 
we can express $\grpol{Y}$ as 
  $ \grpol{Y} = (\grpol{X}+\ol{\grpol{X}}) +_e \ti{\grpol{X}}$
  with 
  $\edg{\ti{\grpol{X}}} = \sum_{n\in \nod{\grpol{Y}},p\in \nod{\grpol{Y}}} 
          \edgnp{\ti{\grpol{X}}}{n}{p}$,   
where $\edgnp{\ti{\grpol{X}}}{n}{p}$ denotes the polarized graph 
made of the graph $\edgnp{\ti{X}}{n}{p}$ as in Remark~\ref{rem:gra-sum}
with its nodes polarized as in $\grpol{Y}$.
\end{rem}

\begin{exmp}
 \label{exam:pol}
Here is a morphism of polarized graphs which is an inclusion
although it is not a matching (the condition $f(\nod{X}^+)= f(\nod{X})\cap \nod{Y}^+$ is not fulfilled):
$$ \begin{array}{|c|c|c|}
 \cline{1-1} \cline{3-3}  
 \rule[0pt]{0pt}{20pt} \xymatrix@=1pc{ & n^- \\
                  p^+ \ar[ru] &  \\ }
 & \xymatrix@R=.2pc@C=1pc{ \\ 
                  \ar[r] &} & 
 \xymatrix@=1pc{ & n^\pm \ar[rd] \ar@(ur,r) & \\
                  p^+ \ar[ru] \ar@<.5ex>[rr] \ar@<-.5ex>[rr] && q^-  \\ } \\ 
 \cline{1-1} \cline{3-3}  
  \end{array}   $$
\end{exmp}

\begin{defn}
\label{defi:alg-minmax}
  The \emph{underlying} 
  graph of a polarized graph $\grpol{X}=(X,X^{\pm})$ is $X$. 
  This defines a functor $\depol:\Grpol\to\Gr$.  
  The polarized graph $\grpol{X}$ \emph{induced by} a graph $X$ 
  is $\grpol{X}=(X,X^{\pm})$ where $\nod{X}^{+}=\nod{X}^{-}=\nod{X}$. 
  This defines a functor $\polar:\Gr\to\Grpol$, 
  which is a right adjoint to $\depol$. 
\end{defn}

\section{Polarized sesqui-pushout graph rewriting}
\label{sec:rew}

In this section we define the polarized sesqui-pushout approach
(\psqpo).  The idea is that the polarization of a node indicates how the
cloning acts on the edges incident to the considered node.  In order
to provide an operational intuition of our \psqpo approach, we start by
giving an algorithmic counterpart of \psqpo in Section~\ref{subsec:set},
before defining the algebraic approach in
Section~\ref{subsec:alg}. The equivalence of the two approaches is
discussed in Section~\ref{subsec:equiv}. We end this section by
stating the functoriality property (also known as vertical
composition) of the \psqpo approach in Section~\ref{subsec:functo}.

\subsection{An algorithmic definition}
\label{subsec:set}

The algorithmic version of polarized sesqui-pushout graph rewriting  
is denoted AlgoPC. 

\begin{defn}
\label{defi:set-setpc}
An \emph{AlgoPC rewrite rule} consists of a tuple $\mu=(L,R,C^+,C^-)$, 
where $L$ and $R$ are graphs and 
$C^+,C^-:\nod{L}\times\nod{R} \to \bN$ are mappings. 
Then $L$ and $R$ are called respectively the \emph{left-hand side}
and the \emph{right-hand side} 
and $C^+$, $C^-$ are called the \emph{cloning multiplicities} of $\mu$.
Let $\mu=(L,R,C^+,C^-)$ be an AlgoPC rewrite rule, 
$G$ a graph and $m:L \to G$ a matching. 
Thus $\nod{G} = \nod{L} + \nod{\ol{L}}$
and $\edg{G} = \edg{L} + \edg{\ol{L}} + \edg{\ti{L}}$.
The \emph{AlgoPC rewrite step} applying the rule $\mu$
to the matching $m$ builds the graph $H$ and the matching $h:R\to H$
such that $h$ is the inclusion and
$ \nod{H} = \nod{R} + \nod{\ol{L}}$  and 
$\edg{H} = \edg{R} + \edg{\ol{L}} + \sum_{n\in\nod{H},p\in\nod{H}}E_{n,p} $
where: 
  \begin{enumerate} 
  \item if $n\in\nod{R}$ and $p\in\nod{R}$ 
  then there is an edge $n\rupto{(e,i)} p$ in $E_{n,p}$ 
  for each edge $n_L \rupto{e} p_L$ in $\edg{\ti{L}}$
  and each $i\in\{1,\dots,C^+(n_L,n) \times C^-(p_L,p)\}$;
  \item if $n\in\nod{R}$ and $p\in\nod{\ol{L}}$
  then there is an edge $n\rupto{(e,i)} p$ in $E_{n,p}$ 
  for each edge $n_L \rupto{e} p$ in $\edg{\ti{L}}$
  and each $i\in\{1,\dots,C^+(n_L,n)\}$;
  \item if $n\in\nod{\ol{L}}$ and $p\in\nod{R}$
  then there is an edge $n\rupto{(e,i)} p$ in $E_{n,p}$ 
  for each edge $n \rupto{e} p_L$ in $\edg{\ti{L}}$
  and each $i\in\{1,\dots,C^-(p_L,p)\}$;
  \item if $n\in\nod{\ol{L}}$ and $p\in\nod{\ol{L}} $
  then $E_{n,p}$ is empty.
  \end{enumerate}
An AlgoPC rewrite step is written $\step{G}{algopc}{H}$ or 
more precisely $\stepp{m}{algopc}{\mu}{h}$. 
\end{defn}

So, when an AlgoPC rule $\mu=(L,R,C^+,C^-)$ is applied to a matching of $L$ in $G$, 
the image of $L$ in $G$ is erased and replaced by $R$,
the subgraph $\ol{L}$ remains unchanged, 
and the edges in $\ti{L}$ are handled according to the cloning multiplicities. 
The subtleties in building clones  
lie in the treatment of the edges in $\ti{L}$.

\begin{exmp}
 \label{exam:set-setpc}
 Let us consider the following rule $\mu = (L,R,C^+,C^-)$ where 
 $$ \begin{array}{|c|c|c|}
 \cline{1-1} \cline{3-3}  
 L && R \\
 \cline{1-1} \cline{3-3}  
 \xymatrix@=1pc{ & f \ar[dl] \ar[dr]&  \\
                  a  &  & b}
 & \qquad\qquad\qquad\qquad &
 \xymatrix@=1pc{ & g \ar[dl] \ar[dr] \ar[d] &  \\
                 c  & d & e} \\
 \cline{1-1} \cline{3-3}  
  \end{array}   $$
$C^+(a,c)=2$, $C^+(a,e)=1$, $C^-(f,g)=2$,
and every other cloning multiplicity is~0. 
Now let us consider the graphs $G$ and $H$: 
 $$ \begin{array}{|c|c|c|}
 \cline{1-1} \cline{3-3}  
 G & & H \\
 \cline{1-1} \cline{3-3}  
 \xymatrix@R=.2pc@C=3pc{\\  
      & \Gamma \ar@(ul,ur) \ar[dd] \ar@<.5ex>[ddddr] \ar@<-.5ex>[ddddl] \\ \\ 
      & f \ar[ddl] \ar[ddr] &  \\ \\ 
      a  \ar@<-.5ex>@/_/[uur] \ar@<1ex>@/^/[uuuur] & & b \\ }
 & \qquad\qquad\qquad &
 \xymatrix@R=.2pc@C=3pc{\\ 
      & \Gamma  \ar@(ul,ur) \ar@/_/[dd] \ar@/^/[dd] \\ \\ 
      & g \ar[ddl] \ar[ddr] \ar[dd] &  \\ \\ 
      c \ar@<.5ex>@/^/[uuuur] \ar@<1ex>@/^2ex/[uuuur]  
        \ar@<.10ex>@/^/[uur] \ar@<.5ex>@/^2ex/[uur] 
        \ar@<-.10ex>@/_/[uur] \ar@<-.5ex>@/_2ex/[uur] & 
      d & 
      e  \ar@<.10ex>@/^/[uul] \ar@<.5ex>@/^2ex/[uul] \ar@<-.5ex>@/_/[uuuul] } \\
 \cline{1-1} \cline{3-3}  
  \end{array}   $$
Then $G$ rewrites into $H$ using the rule~$\mu$  
and the matching $L\to G$ defined by the inclusion. Indeed, 
as specified by the cloning multiplicities, 
the edge going out of node $a$ towards $\Gamma$ is cloned three times,
two times by the edges going out from $c$ towards $\Gamma$ 
($C^+(a,c)=2$) and a third time by the edge
going out from $e$ ($C^+(a,e)=1$), the node $b$ is erased as well as
all its incident edges, and the incoming edges of $f$ are duplicated 
($C^-(f,g)=2$) and redirected towards $g$.
The edge from $a$ towards $f$ is copied four times ($C^+(a,c) \times C^-(f,g)= 4$) 
from $c$ to $g$ and two times ($C^+(a,e) \times C^-(f,g)= 2$) from $e$ to~$g$.
\end{exmp}

\subsection{An algebraic definition}  
\label{subsec:alg}

In this section we define the polarized sesqui-pushout 
rewriting system (\psqpo). In Section~\ref{subsec:equiv}
we will prove that \psqpo and AlgoPC (from Section~\ref{subsec:set}) 
are equivalent. 
The \psqpo rewriting can be seen as a generalization of 
the sesqui-pushout approach \cite{CorradiniHHK06}, 
applied to a heterogeneous categorical framework including both
``ordinary'' graphs and polarized graphs,
which allows more flexibility in cloning. 
The definition of the \psqpo rewriting relies on the well-known 
categorical notions of pushout (PO) and pullback (PB).
Pushouts of graphs are described in Proposition~\ref{prop:alg-po} 
and final pullback complements of polarized graphs 
in Proposition~\ref{prop:alg-fpbc}.
These results are proved in the Appendix
(Propositions~\ref{app-prop:gra-po} and~\ref{app-prop:pol-pc}, respectively). 

\begin{prop}
\label{prop:alg-po} 
Let $r:K\to R$ be a morphism of graphs and $d:K\to D$ a matching of graphs.
The following square, where $h$ is the inclusion, is a pushout of $d$ and $r$ in $\Gr$.

  $$ \xymatrix@C=6pc{ 
  K \ar[d]_{d} \ar[r]^{r} & R \ar[d]^{h} \\ 
  D=(K+\ol{K})+_e\ti{K} \ar[r]_{r_1=(r+\id_{\ol{K}})+_e\ti{r}} & 
    H=(R+\ol{K})+_e \ti{R}   
    \ar@{-}[]+L+<-6pt,+1pt>;[]+LU+<-6pt,+6pt> 
    \ar@{-}[]+U+<-1pt,+6pt>;[]+LU+<-6pt,+6pt> 
  \\ 
  }$$
where:
  $ \edgnp{\ti{R}}{n}{p} = 
  \sum_{n_D\in r_1^{-1}(n),p_D\in r_1^{-1}(p)} \edgnp{\ti{K}}{n_D}{p_D}
  \;\mbox{ for all }\; n,p\in\nod{H}  $
  and $ \ti{r}:\ti{K}\to\ti{R} 
  \mbox{ maps } n_D\rupto{e} p_D \mbox{ to } r_1(n_D)\rupto{e} r_1(p_D) \;.$
\end{prop} 

This means that $\nod{H}=\nod{R}+\nod{\ol{K}}$ and 
$\edg{H}=(\edg{R}+\edg{\ol{K}})+_e 
\sum_{n,p\in\nod{R}+\nod{\ol{K}}} \edgnp{\ti{R}}{n}{p}$
where: 
  $$ \edgnp{\ti{R}}{n}{p} = \begin{cases}
   \emptyset 
  & \;\mbox{ when }\; n,p \in\nod{\ol{K}}  \\
   \sum_{p_K\in r^{-1}(p)} \edgnp{\ti{K}}{n}{p_K}
  & \;\mbox{ when }\; n\in\nod{\ol{K}}, \;p \in\nod{R}  \\
  \sum_{n_K\in r^{-1}(n)} \edgnp{\ti{K}}{n_K}{p}
  & \;\mbox{ when }\; n \in\nod{R},\; p\in\nod{\ol{K}}  \\
  \sum_{n_K\in r^{-1}(n),p_K\in r^{-1}(p)} \edgnp{\ti{K}}{n_K}{p_K}
  & \;\mbox{ when }\; n,p \in\nod{R}  \\ 
  \end{cases} $$ 

Let us now define final pullback complements in the naive way, 
this definition coincides with the one in \cite{DyckThol87,CorradiniHHK06} 
when both exist.

\begin{defn}
\label{defi:alg-fpbc}
In a category $\catC$,
let $a:X\to Y$ and $g:Y\to Y_1$ be consecutive morphisms.
A \emph{pullback complement} (PBC) of $a$ and $g$ 
is an object $X_1$ with a pair of morphisms $f:X\to X_1$, $a_1:X_1\to Y_1$ 
such that there is a pullback:
$$  \xymatrix@C=4pc@R=1.5pc{
  Y \ar[d]_{g} & X \ar[l]_{a} \ar[d]^{f} 
    \ar@{-}[]+L+<-6pt,-1pt>;[]+LD+<-6pt,-6pt> 
    \ar@{-}[]+D+<-1pt,-6pt>;[]+LD+<-6pt,-6pt>  
  \\
  Y_1 & X_1 \ar[l]_{a_1} 
  \\
  } $$
A morphism $k:(X_1,f,a_1)\to (X'_1,f',a'_1)$ of pullback complements of $a$ and $g$ 
is a morphism $k:X_1\to X'_1$ in $\catC$ 
such that $k\circ f=f'$ and $a'_1\circ k=a_1$. 
This yields the category of pullback complements of $a$ and $g$,
and the \emph{final pullback complement} (FPBC) of $a$ and $g$ 
is defined as the final object in this category, 
if it does exist.
The FPBC is illustrated as follows: 
$$  \xymatrix@C=4pc@R=1.5pc{
  Y \ar[d]_{g} & X \ar[l]_{a} \ar[d]^{f} 
    \ar@{-}[]+L+<-6pt,-1pt>;[]+LD+<-6pt,-6pt> 
    \ar@{-}[]+D+<-1pt,-6pt>;[]+LD+<-6pt,-6pt>  
  \\
  Y_1 & X_1 \ar[l]_{a_1} 
    \ar@{-}[]+L+<-6pt,+1pt>;[]+LU+<-6pt,+6pt> 
    \ar@{-}[]+L+<-8pt,+1pt>;[]+LU+<-8pt,+8pt> 
    \ar@{-}[]+U+<-1pt,+6pt>;[]+LU+<-6pt,+6pt>  
    \ar@{-}[]+U+<-1pt,+8pt>;[]+LU+<-8pt,+8pt> 
  \\
  } $$
\end{defn}

\begin{prop}
\label{prop:alg-fpbc} 
Let $l:\grpol{K} \to \grpol{L}$ be a morphism and 
$m:\grpol{L}\to \grpol{G}$ a matching of polarized graphs.
The following square, where $d$ is the inclusion, is a FPBC of $l$ and $m$ in 
$\Grpol$: 
  $$ \xymatrix@C=6pc{ 
  \grpol{L} \ar[d]_{m} & \grpol{K} \ar[l]_{l} \ar[d]^{d} 
    \ar@{-}[]+L+<-6pt,-1pt>;[]+LD+<-6pt,-6pt> 
    \ar@{-}[]+D+<-1pt,-6pt>;[]+LD+<-6pt,-6pt>  
 \\
   \grpol{G}=(\grpol{L}+\ol{\grpol{L}})+_e\ti{\grpol{L}} & 
     \grpol{D}=(\grpol{K}+\ol{\grpol{L}})+_e\ti{\grpol{K}} \ar[l]^{l_1=(l+\id_{\ol{\grpol{L}}})+_e\ti{l}} 
    \ar@{-}[]+L+<-6pt,+1pt>;[]+LU+<-6pt,+6pt> 
    \ar@{-}[]+L+<-8pt,+1pt>;[]+LU+<-8pt,+8pt> 
    \ar@{-}[]+U+<-1pt,+6pt>;[]+LU+<-6pt,+6pt>  
    \ar@{-}[]+U+<-1pt,+8pt>;[]+LU+<-8pt,+8pt> 
  \\
  }$$
where:
  $ \edgnp{\ti{K}}{n_D}{p_D} = \edgnp{\ti{L}}{l_1(n_D)}{ l_1(p_D)} 
   \;\mbox{ for all }\; n_D\in\nod{D}^+, \; p_D\in\nod{D}^-$ 
   and otherwise $\edgnp{\ti{K}}{n_D}{p_D} = \emptyset) $
  and $\ti{l}:\ti{K}\to\ti{L} 
  \mbox{ maps } n_D\rupto{e} p_D \mbox{ to } l_1(n_D)\rupto{e} l_1(p_D) \;.$
\end{prop}

\begin{defn}
\label{defi:alg-pcpo}
  A \emph{\psqpo rewrite rule} $\rho$ is a polarized graph $\grpol{K}=(K,K^\pm)$
  together with a span of graphs $\spa{L}{l}{K}{r}{R}$. 
  This is denoted:
  $ \xymatrix@C=2pc{
  L & \grpol{K} \ar[l]_{l} \ar[r]^{r} & R \\ } $

  \noindent
  Let $\rho=\spa{L}{l}{\grpol{K}}{r}{R}$ be a \psqpo rewrite rule, 
  $G$ a graph and $m:L \to G$ a matching of graphs. 
  The \emph{\psqpo rewrite step} applying the rule $\rho$
  to the matching $m$ builds the graph $H$ and the matching $h:R\to H$
  such that $h$ is the inclusion, in four steps: 
  \begin{enumerate}
  \item The first step builds the polarized graphs
    $\grpol{L}=\polar(L)$ induced by $L$ and $\grpol{G}=\polar(G)$ 
    induced by $G$ (see Definition \ref{defi:alg-minmax}).
  \item Then a FPBC of $m$ and $l$ as in Proposition~\ref{prop:alg-fpbc} $\Grpol$ is built, 
    which gives rise to a polarized graph $\grpol{D}$, 
    a morphism $l_1:\grpol{D}\to\grpol{G}$ 
    and a matching $d:\grpol{K}\to\grpol{D}$ in $\Grpol$.
  \item The polarizations are forgotten, 
    by considering the underlying graphs $K=\depol(\grpol{K})$ 
    and $D=\depol(\grpol{D})$.
  \item A pushout of $d$ and $r$ in $\Gr$ is constructed 
    as in Proposition~\ref{prop:alg-po},
    which gives rise to a graph $H$, 
    a morphism $r_1:D\to H$ 
    and a matching $h:R\to H$ in~$\Gr$. 
  \end{enumerate}
A \psqpo rewrite step is written $\step{G}{\arpsqpo}{H}$ or $\stepp{m}{\arpsqpo}{\rho}{h}$  
and it is illustrated by the following diagram: 
$$ \xymatrix@C=6pc{
L \ar[d]_{m} & \grpol{K} \ar[l]_{l} \ar[r]^{r} \ar[d]^{d} 
    \ar@{-}[]+L+<-6pt,-1pt>;[]+LD+<-6pt,-6pt> 
    \ar@{-}[]+D+<-1pt,-6pt>;[]+LD+<-6pt,-6pt>  
& R \ar[d]^{h}\\
G & \grpol{D}  \ar[l]_{l_1} \ar[r]^{r_1} 
    \ar@{-}[]+L+<-6pt,+1pt>;[]+LU+<-6pt,+6pt> 
    \ar@{-}[]+L+<-8pt,+1pt>;[]+LU+<-8pt,+8pt> 
    \ar@{-}[]+U+<-1pt,+6pt>;[]+LU+<-6pt,+6pt> 
    \ar@{-}[]+U+<-1pt,+8pt>;[]+LU+<-8pt,+8pt> 
& H 
    \ar@{-}[]+L+<-6pt,+1pt>;[]+LU+<-6pt,+6pt> 
    \ar@{-}[]+U+<-1pt,+6pt>;[]+LU+<-6pt,+6pt> 
\\
}$$
\end{defn}

It follows that a \psqpo rewrite step builds a rule
$\rho_1=(\spa{G}{l_1}{\grpol{D}}{r_1}{H})$. 
Merging Propositions~\ref{prop:alg-po} and~\ref{prop:alg-fpbc} 
yields the following result,
which provides an explicit description of \psqpo rewrite steps. 

\begin{thm}
\label{theo:alg-step}
Let $\rho=(\spa{L}{l}{\grpol{K}}{r}{R})$ be a \psqpo rewrite rule 
and $m : L \to G$ a matching, so that $G = (L+\ol{L})+_e\ti{L}$.
The \psqpo rewrite step applying $\rho$ to $m$ builds the matching
$h:R\to H$ where $h$ is the inclusion and 
  $ H=(R+\ol{L})+_e\ti{R} $
where, for all nodes $n,p$ in $\nod{H}$: 
  $$ \edgnp{\ti{R}}{n}{p} = \begin{cases}
  \sum_{n_K^+\in r^{-1}(n),p_K^-\in r^{-1}(p)} \edgnp{\ti{L}}{l(n_K)}{l(p_K)}
  & \;\mbox{ when }\; n,p \in\nod{R}  \\
  \sum_{n_K^+\in r^{-1}(n)} \edgnp{\ti{L}}{l(n_K)}{p}
  & \;\mbox{ when }\; n \in\nod{R},\; p\in\nod{\ol{L}}  \\
  \sum_{p_K^-\in r^{-1}(p)} \edgnp{\ti{L}}{n}{l(p_K)}
  & \;\mbox{ when }\; n\in\nod{\ol{L}}, \;p \in\nod{R}  \\
  \emptyset 
  & \;\mbox{ when }\; n,p \in\nod{\ol{L}}  \\
  \end{cases} $$ 
\end{thm}

This means that $H$ is made of a copy of $R$ together with
the non-matching nodes of $G$ 
(i.e., nodes of $G$ which are not in the image of the matching)  
and with an edge $n \longrupto{(n_D,p_D,e)} p$ 
for each $n_D$ in $\nod{K}^+ + \nod{\ol{L}}$ such that $r_1(n_D)=n$, 
each $p_D$ in $\nod{K}^- + \nod{\ol{L}}$ such that $r_1(p_D)=p$
and each $n_G \rupto{e} p_G$ in $\edg{G}$  
where $n_G=l_1(n_D)$ and $p_G=l_1(p_D)$.
Since $l_1$ and $r_1$ are the identity on $\nod{\ol{L}}$, 
whenever both $n$ and $p$ are in $\nod{\ol{L}}$ then 
$ \edgnp{H}{n}{p} = \edgnp{G}{n}{p}$. 

\begin{exmp}
\label{exam:alg-pcpo} 
Example~\ref{exam:set-setpc} is now seen from the categorical point of view. 
The rewrite rule $\mu$ can be translated to the following rule
 $$ \begin{array}{|c|c|c|c|c|}
 \cline{1-1} \cline{3-3} \cline{5-5}   
 L && \grpol{K} && R \\
 \cline{1-1} \cline{3-3} \cline{5-5}    
 \xymatrix@=1pc{ & f \ar[dl] \ar[dr] &  \\
                  a  &  & b} & 
  \xymatrix@R=.2pc@C=1pc{ \\ 
                  & \ar[l]_{l} } & 
 \xymatrix@R=0pc@C=.5pc{ &f_1^- & f_2^- &  \\
                  a_1^+  & a_2^+ && a_3^+ } &
  \xymatrix@R=.2pc@C=1pc{ \\ 
                  \ar[r]^{r} &} & 
 \xymatrix@=1pc{ & g \ar[dl] \ar[dr] \ar[d] &  \\
                 c  & d & e} \\
 \cline{1-1} \cline{3-3} \cline{5-5}    
  \end{array}   $$
where $l(f_1)=l(f_2)=f$, $l(a_1)=l(a_2)=l(a_3)=a$, $r(f_1)=r(f_2)=g$, 
$r(a_1)=r(a_2)=c$ and $r(a_3)=e$. 
As in Example~\ref{exam:set-setpc}, the matching is the inclusion
of $L$ in $G$ (below) and the \psqpo rewrite step builds: 
$$ \begin{array}{|c|c|c|c|c|}
 \cline{1-1} \cline{3-3} \cline{5-5}   
 G && \grpol{D} && H \\
 \cline{1-1} \cline{3-3} \cline{5-5}    
 \xymatrix@R=.2pc@C=2pc{ \\  
      & \Gamma \ar@(ul,ur) \ar[dd] \ar@<.5ex>[ddddr] \ar@<-.5ex>[ddddl] \\ \\ 
      & f \ar[ddl] \ar[ddr] &  \\ \\ 
      a  \ar@<-.5ex>@/_/[uur] \ar@<1ex>@/^/[uuuur] & & b \\ } & 
 \xymatrix@R=.2pc@C=1pc{ \\ & \ar[l]_{l_1} \\ } & 
 \xymatrix@R=.2pc@C=1pc{ \\ 
     && \Gamma^\pm \ar@(ul,ur) \ar@/^/[ddl] \ar[dd] \\ \\  
     & f_1^- & f_2^- &  \\ \\
     a_1^+ \ar@<1ex>@/^3ex/[uuuurr] \ar[uur] \ar[uurr]  & & 
     \ar[uul]\ar[uu]a_2^+ \ar@<-2ex>@/_/[uuuu] & 
     \ar[uull]\ar[uul]a_3^+ \ar@<-2ex>@/_/[uuuul] \\ } &
 \xymatrix@R=.2pc@C=1pc{ \\ \ar[r]^{r_1} & \\ } & 
 \xymatrix@R=.2pc@C=3pc{ \\ 
      & \Gamma  \ar@(ul,ur) \ar@/_/[dd] \ar@/^/[dd] \\ \\ 
      & g \ar[ddl] \ar[ddr] \ar[dd] &  \\ \\ 
      c \ar@<.5ex>@/^/[uuuur] \ar@<1ex>@/^2ex/[uuuur]  
        \ar@<.10ex>@/^/[uur] \ar@<.5ex>@/^2ex/[uur] 
        \ar@<-.10ex>@/_/[uur] \ar@<-.5ex>@/_2ex/[uur] & 
      d & 
      e  \ar@<.10ex>@/^/[uul] \ar@<.5ex>@/^2ex/[uul] \ar@<-.5ex>@/_/[uuuul] } \\
 \cline{1-1} \cline{3-3} \cline{5-5}    
 \end{array}   $$
The resulting graph $H$ and matching $h:R\to H$
are the same as in Example~\ref{exam:set-setpc}.
\end{exmp}

\subsection{Equivalence of the algorithmic and the algebraic definitions}
\label{subsec:equiv}

Now we can prove that the categorical
and the algorithmic definitions given in the previous Sections~\ref{subsec:alg} 
and~\ref{subsec:set} are equivalent. 
We start by noticing that edges and neutral nodes in the interface $\grpol{K}$
of a rule  $\rho=(\spa{L}{l}{\grpol{K}}{r}{R})$ can be removed without changing the operational semantics of the rule.

\begin{defn}
\label{defi:alg-norm}
For each \psqpo rewrite rule $\rho=(\spa{L}{l}{\grpol{K}}{r}{R})$,  
the \emph{normalization} of $\rho$ is the 
\psqpo rewrite rule $\norm(\rho)=(\spa{L}{l'}{\grpol{K'}}{r'}{R})$  
where $\grpol{K'}$ is 
obtained by removing all the edges and all the neutral nodes from $\grpol{K}$ 
and by replacing every node $n^\pm$ by two nodes  
${n_+}^+$ and ${n_-}^-$, 
and where $l'$ and $r'$ are similar to $l$ and $r$ 
with $l'(n_+)=l'(n_-)=l(n)$ and $r'(n_+)=r'(n_-)=r(n)$.
\end{defn}

\begin{rem}
\label{rem:alg-norm}
The rules $\rho$ and $\norm(\rho)$ are equivalent, 
in the sense that 
for all matchings of graphs $m:L\to G$ and $h:R\to H$:
   $ \stepp{m}{\arpsqpo}{\rho}{h} \;\mbox{ if and only if }\;
\stepp{m}{\arpsqpo}{\norm(\rho)}{h} \;.$ Indeed, one can note that in
Theorem \ref{theo:alg-step} the definition of $H$ does not use 
the neutral nodes nor the edges of $\grpol{K}$, moreover for every node 
$n^{\pm}$ in $\grpol{K}$ (see Definition~\ref{defi:alg-norm}) the nodes ${n_+}^+$ 
and ${n_-}^-$ of $\grpol{K}'$ get identified in $R$ hence in $H$. 
\end{rem}

Hereafter, we provide translations of the rules defined according to
the algorithmic approach into the the algebraic approach and vice-versa.
 
\begin{defn}  
\label{defi:equiv-equiv}
Let $\mu=(L,R,C^+,C^-)$ be an AlgoPC rewrite rule. 
Let $\grpol{K}$ be the polarized graph without edges and  with, 
for each $\star\in\{+,-\}$,  
a node ${(n_L,n_R)_{i,\star}}^\star$
for each pair of nodes $(n_L,n_R) \in \nod{L}\times\nod{R}$
and each $i\in \{1,\dots,C^\star(n_L,n_R)\}$.
Let $\spa{L}{l}{K}{r}{R}$ be the span of graphs 
such that $l((n_L,n_R)_{i,\star})=n_L$ and $r((n_L,n_R)_{i,\star})=n_R$. 
Let $\catversion(\mu)$ denote the \psqpo rewrite rule 
$\spa{L}{l}{\grpol{K}}{r}{R}$.

Let $\rho=(\spa{L}{l}{\grpol{K}}{r}{R})$ be a \psqpo rewrite rule.
For each $\star\in\{+,-\}$ and each $(n_L,n_R) \in \nod{L}\times\nod{R}$, 
let $C^\star(n_L,n_R)$ be the number of nodes $n_K$
in $\nod{K}^\star$ such that $l(n_K)=n_L$ and $r(n_K)=n_R$. 
Let $\setversion(\rho)$ denote the AlgoPC rewrite rule 
$(L,R,C^+,C^-)$.

\end{defn}

\begin{prop} 
\label{prop:equiv-equiv}
Let $\mu$ be an AlgoPC rewrite rule and $\rho$ a \psqpo rewrite rule.
Then $\setversion(\catversion(\mu)) = \mu$ 
and $\catversion(\setversion(\rho)) = \norm(\rho)$. 
In addition, the rules $\mu$ and $\catversion(\mu)$ are equivalent, 
and the rules $\rho$ and $\setversion(\rho)$ are equivalent, 
in the sense that 
for all matchings of graphs $m:L\to G$ and $h:R\to H$:
$ \stepp{m}{algopc}{\mu}{h}$ if and only if 
 $\stepp{m}{\arpsqpo}{\catversion(\mu)}{h}$, and
 $\stepp{m}{\arpsqpo}{\rho}{h}$ if and only if 
 $\stepp{m}{algopc}{\setversion(\rho)}{h}$.
\end{prop} 

The proof consists in a simple comparison of the definition of an 
AlgoPC rewrite step (Definition \ref{defi:set-setpc}) with Theorem 
\ref{theo:alg-step}. 

\subsection{Functoriality of polarized sesqui-pushout rewriting} 
\label{subsec:functo}

In \cite{DuvalEP11} we have defined categorical rewriting systems 
and their (horizontal) composition, 
as well as the functoriality property of a rewriting system
(which corresponds to the vertical composition in \cite{Lowe10}) 
and we have proved that the composition of categorical rewriting systems 
preserves the functoriality property.  
We now show that \psqpo rewriting can be seen as a 
categorical rewriting system  
which can be presented as the composition of
four functorial categorical rewriting systems,
from which it follows that \psqpo rewriting is functorial 
(Proposition~\ref{prop:pcpo-funct}). 

\begin{defn}
\label{defi:crs-defi}
A \emph{categorical rewriting system} is a span of functors:
  $\spa{\catL}{\funL}{\catP}{\funR}{\catR}$
together with a family of partial functions 
$ \funS=(\funS_\rho)_{\rho\in|\catP|} $
indexed by the objects $\rho$ of $\catP$,
such that the partial  function $\funS_\rho$ 
maps morphisms in $\catL$ with source $\funL(\rho)$ 
to morphisms in $\catP$ with source $\rho$, 
is such a way that $\funL(\funS_\rho(f))=f$ for every $f$ in the domain 
of $\funS_\rho$.
The objects of $\catP$ are the \emph{rewrite rules} or \emph{productions},
the morphisms of $\catL$ and $\catR$ are the 
left-hand side and right-hand side \emph{matches},
and the partial function $\funS_\rho$ is the \emph{rewriting process} function 
with respect to $\rho$, with domain denoted as $\dom(\funS_\rho)$. 
Given a rule $\rho$, the \emph{rewrite step applying $\rho$} 
is the partial function 
from the set of morphisms in $\catL$ with source $\funL(\rho)$ 
to the set of morphisms in $\catR$ with source $\funR(\rho)$ 
which maps every match $f$ in $\dom(\funS_\rho)$
to the match $g=\funR(\funS_\rho(f))$. 
\end{defn}

The fact that $\funS_\rho$ is partial allows to deal with conditions
on a matching (with respect to the rule $\rho$) 
ensuring that $\rho$ can be fired:
this is the case with gluing conditions in the double pushout approach
and with conflict-freeness conditions in the sesqui-pushout approach. 

The definitions below are given up to isomorphism, more details can be found 
in \cite{DuvalEP11}. 

\begin{defn}
\label{definition:crs-comp}
Let $\rs=(\spa{\catL}{\funL}{\catP}{\funR}{\catR},\funS)$
and $\rs'=(\spa{\catL'}{\funL'}{\catP'}{\funR'}{\catR'},\funS')$
be two categorical rewriting systems
which are consecutive, in the sense that $\catR=\catL'$. 
The \emph{composition} of $\rs$ and $\rs'$ is the categorical 
rewriting system  
 $ \rs' \circ \rs = (\spa{\catL}{\funL''}{\catP''}{\funR''}{\catR'},
\funS''_{(\rho,\rho')})$
where $\spa{\catL}{\funL''}{\catP''}{\funR''}{\catR'}$ is the composition
of the spans in $\rs$ and $\rs'$ and where the family of partial functions 
$\funS''=(\funS''_{\rho''})_{\rho''\in|\catP''|}$ is defined as follows, 
for each $\rho''=(\rho,\rho')$ in $\catP''$:
the domain of $\funS''_{\rho''}$ is made of the 
morphisms $f$ in $\dom(\funS_\rho)$ such that 
$\funR(\funS_\rho(f))$ is in $\dom(\funS'_{\rho'})$, and for
each $f\in\dom(\funS''_{\rho''})$:   
    $ \funS''_{(\rho,\rho')}(f) = (\funS_\rho(f), \funS'_{\rho'}(f'))$
    where  $f'= \funR(\funS_\rho(f)) \;.$
$$ 
\xymatrix@C=2pc@R=1.5pc{
L \ar[d]_{f} \ar@{~>}[rr]^{\rho} & \ar@{}[d]|{\funS_\rho(f)} & 
R=L' \ar[d]^{f'} \ar@{~>}[rr]^{\rho'} & \ar@{}[d]|{\funS_{\rho'}(f')} & 
R' \ar[d]^{f''}\\ 
L_1 \ar@{~>}[rr]_{\rho_1} && 
R_1=L'_1 \ar@{~>}[rr]_{\rho'_1} && 
R'_1 \\ 
}  \quad
\xymatrix@R=.8pc{  \\ = \\}  \quad
 \xymatrix@C=2pc@R=1.5pc{
L \ar[d]_{f} \ar@{~>}[rr]^{\rho''} & \ar@{}[d]|{\funS_{(\rho,\rho')}(f)} & R' \ar[d]^{f''} \\ 
L_1 \ar@{~>}[rr]_{\rho''_1} && R'_1 \\ 
}$$
\end{defn}

\begin{defn}
\label{definition:crs-fun} 
A categorical rewriting system
$(\spa{\catL}{\funL}{\catP}{\funR}{\catR},\funS)$ is \emph{functorial}
if for each rule $\rho:L\arsto R$ the partial function $\funS_\rho$
satisfies: the identity $\id_L$ is in the domain of $\funS_\rho$ and $
\funS_\rho(\id_L)=\id_\rho $, and for each pair of consecutive
morphisms $f_1:L\to L_1$ and $f_2:L_1\to L_2$ in $\catL$, if
$f_1\in\dom(\funS_\rho)$ and $f_2\in\dom(\funS_{\rho_1})$, where
$\rho_1$ denotes the target of $\funS_\rho(f_1)\,$, then $f_2\circ
f_1\in\dom(\funS_\rho)$ and $ \funS_\rho(f_2\circ f_1) =
\funS_{\rho_1}(f_2)\circ \funS_\rho(f_1) $.
\end{defn}

\begin{exmp}
\label{exam:crs-func} 
Every functor $\Theta:\catL\to\catR$ gives rise to 
a categorical rewriting system 
$\rs_{\Theta} = (\spa{\catL}{\id}{\catL}{\Theta}{\catR},\funS)$ 
where $\funS$ is made of the identity functions. 
The categorical rewriting system $\rs_{\Theta}$ is functorial. 
It is called \emph{the rewriting system associated to~$\Theta$}.
\end{exmp}

The \psqpo rewriting system may be recovered by composing 
four categorical rewriting systems.
Two of them are instances of Example~\ref{exam:crs-func},
the other two are defined in \cite{DuvalEP11}. 
\begin{enumerate}
\item The rewriting system $\rs_{\polar}$ associated to the functor 
$\polar:\Gr\to\Grpol$, which maps each graph 
to its induced polarized graph.
\item The FPBC rewriting system $\rs_{\mathrm{fpbc},\Grpol}$ 
on the category $\Grpol$, which is defined as follows.
A rule is a morphism of polarized graphs $l:\grpol{L}\lto\grpol{K}$
and a morphism from $l:\grpol{L}\lto\grpol{K}$ 
to $l_1:\grpol{G}\lto\grpol{D}$  
is made of two matchings of polarized graphs 
$m:\grpol{L}\to \grpol{G}$ and $d:\grpol{K}\to \grpol{D}$ such that 
$l_1\circ d= m\circ l$. 
For every rule $l:\grpol{L}\lto\grpol{K}$ 
and matching $m:\grpol{L}\to \grpol{G}$, 
the morphism $\funS_l(m)$ is built from the final pullback complement  
of $m$ and $l$ in the category $\Grpol$.   
\item The rewriting system $\rs_{\depol}$ associated to the functor 
$\depol:\Grpol\to\Gr$, which maps each polarized graph 
to its underlying graph.
\item The PO rewriting system $\rs_{\mathrm{po},\Gr}$ 
on the category $\Gr$, which is defined as follows. 
A rule is a morphism of graphs $r:K\to R$ 
and a morphism from $r:K\to R$ to $r_1:D\to H$ 
is made of two matchings of graphs 
$d:K\to D$ and $h:R\to H$ such that 
$r_1\circ d= h\circ r$.
For every rule $r:K\to R$ and matching $d:K\to D$, 
the morphism $\funS_r(d)$ is built from the pushout 
of $d$ and $r$ in the category $\Gr$.  
\end{enumerate}

\begin{defn} 
\label{defi:pcpo-crs}
The \emph{\psqpo rewriting system} $\rs_{\mathrm{\psqpo}}$
is the categorical rewriting system 
  $ \rs_{\mathrm{\psqpo}} =  
  \rs_{\mathrm{po},\Gr} \circ \rs_{\depol} \circ  
  \rs_{\mathrm{fpbc},\Grpol} \circ \rs_{\polar} $  
\end{defn} 

\begin{prop}
\label{prop:pcpo-funct}
The \psqpo rewriting system is functorial. 
\end{prop}

\section{An extension to labeled polarized graphs}  
\label{sec:lab} 

For several modeling purposes, it is useful to add labels to nodes and edges. 
In this section we discuss an extension of our proposal 
in order to perform polarized sesqui-pushout graph transformation 
on labeled graphs. 
We provide syntactic conditions which ensure the existence of the
constructions involved in the rewriting process.
Hereafter, two sets $\lset_N$ and $\lset_E$ are given, 
they are called the set of \emph{labels} for nodes and for edges, respectively. Moreover, 
all the constructions are considered up to isomorphism. 

\begin{defn}
\label{defi:lab-cat}
A \emph{labeled graph} $(X,\lfun)$ is a graph $X$ together with two partial functions 
$\lfun:\nod{X} \parto \lset_N$ for the \emph{labeling} of nodes 
and $\lfun:\edg{X} \parto \lset_E$ for the \emph{labeling} of edges. 
A \emph{morphism} of labeled graphs $f:(X,\lfun_X)\to (Y,\lfun_Y)$ 
is a morphism of graphs $f:X\to Y$ which preserves the labels, 
in the sense that if a node or an edge $x$ in $X$ is labeled with $a$
then $f(x)$ in $Y$ is labeled with $a$
(if $x$ is unlabeled there is no restriction on the labeling of $f(x)$). 
This provides the category $\Grlab$ of labeled graphs
(with labels in $\lset_N$ and $\lset_E$). 
\end{defn}

A labeled graph $(X,\lfun)$ is often simply denoted $X$. 
A node $x$ is denoted $x\colon a$ if it is labeled with $a$ 
and $x \colon \nolab$ if it is unlabeled.  
An edge $x \to y$ is denoted $x \rupto{a} y$ if it is labeled with $a$ 
and simply $x \to y$ if it is unlabeled.  
A \emph{matching} of labeled graphs is a matching of graphs 
which preserves the labels. 
Proposition~\ref{prop:alg-po} is generalized to labeled graphs as follows.

\begin{prop} 
\label{prop:lab-po} 
Let $r:K\to R$ be a morphism of labeled graphs 
and $d:K\to D$ a matching of labeled graphs.
Let us assume that:
\begin{itemize}
\item For each node or edge $x$ in $K$,
if $r(x):a$ and $d(x):b$, then $a=b$.
\item For each distinct nodes or edges $x,y$ in $K$,
if $r(x)=r(y)$, $d(x):a$ and $d(y):b$, then $a=b$.
\end{itemize} 
Then the pushout of $d$ and $r$ in $\Grlab$ exists,
its underlying diagram of graphs is the pushout of $d$ and $r$ in $\Gr$
and each node or edge $x$ in $H$ is labeled if and only if 
it is the image of a labeled node or edge in $R$ or in $D$.
\end{prop} 

Thanks to the assumptions, no conflict may arise 
when labeling the graph $H$: 
if a node or edge $x$ in $H$ is the image of 
several nodes or edges in $R$ or in $D$
(at most one in $R$ and maybe several in $D$), 
then all of them have the same label, which becomes the label of $x$. 

Since polarizations and labelings do not interfere, 
these definitions and results are easily combined 
with the definitions and results in Section~\ref{subsec:pol}. 
This provides the category $\Grpolab$ of \emph{labeled polarized graphs},
and Proposition~\ref{prop:alg-fpbc} is generalized 
to labeled polarized graphs as follows.

\begin{prop} 
\label{prop:lab-fpbc} 
Let $l:\grpol{K} \to \grpol{L}$ be a morphism of labeled polarized graphs and 
$m:\grpol{L}\to \grpol{G}$ a matching of labeled polarized graphs.
Then the FPBC of $l$ and $m$ exists,
its underlying diagram of graphs is the FPBC in $\Gr$
and each node or edge $x_D$ in the graph $D$ is labeled as follows:
if $x_D$ is not in the image of $K$ then $x_D$ is labeled in $D$ like $l_1(x_D)$ in $G$, 
otherwise $x_D=d(x_K)$ for a unique $x_K$ in $K$ and the label of $x_D$ in $D$
is determined by the labels of $x_K$ in $K$, $x_L=l(x_K)$ in $L$ 
and $x_G=m(x_L)$ in $G$ according to the following labeling patterns:

 $$  \xymatrix@=1pc{ 
  x_L\colon a \ar@{|->}[d] & x_K \colon a\ar@{|->}[d] \ar@{|->}[l] \\ 
  x_G \colon a & x_D \colon a \ar@{|->}[l] 
  \\ 
  }
\quad 
 \xymatrix@=1pc{ 
  x_L\colon \nolab \ar@{|->}[d] & x_K \colon \nolab \ar@{|->}[d] \ar@{|->}[l] \\ 
  x_G \colon a & x_D \colon a \ar@{|->}[l] 
  \\ 
  }
\quad 
 \xymatrix@=1pc{ 
  x_L\colon \nolab \ar@{|->}[d] & x_K \colon \nolab \ar@{|->}[d] \ar@{|->}[l] \\ 
  x_G \colon \nolab & x_D \colon \nolab \ar@{|->}[l] 
  \\ 
  }
\quad 
 \xymatrix@=1pc{ 
  x_L\colon a \ar@{|->}[d] & x_K \colon \nolab \ar@{|->}[d] \ar@{|->}[l] \\ 
  x_G \colon a & x_D \colon \nolab \ar@{|->}[l] 
  \\ 
  }
$$
\end{prop} 

The labeled \psqpo rewrite rules cannot be defined simply as 
\psqpo rewrite rules 
where the graphs are labeled and the morphisms preserve the labels: 
indeed, in order to avoid conflicts in labeling the pushout, 
the assumptions in Proposition~\ref{prop:lab-po}
must be satisfied after the construction of the polarized FPBC
(Proposition~\ref{prop:lab-fpbc}). 
This leads to the following definition. 

\begin{defn}
\label{defi:lab-prod}
A \emph{labeled \psqpo rewrite rule} is a \psqpo rewrite rule 
$\spa{L}{l}{\grpol{K}}{r}{R}$ (Definition~\ref{defi:alg-pcpo}) 
where the graphs are labeled and the morphisms preserve the labels,
 such that the following conditions are fulfilled~: 
\begin{itemize}
\item For each unlabeled node or edge $x$ in $K$,
if $l(x)$ is unlabeled in $L$ then $r(x)$ is unlabeled in $R$.
\item For each distinct unlabeled nodes or edges $x,y$ in $K$,
if $l(x)$ and $l(y)$ are distinct and
at least one of $l(x)$ or $l(y)$ is unlabeled in $L$ then 
$r(x)\ne r(y)$ in $R$. 
\end{itemize}
\end{defn}

\begin{rem}
The \emph{labeled \psqpo rewriting system} 
can be defined similarly to the \psqpo rewriting system 
in Definition~\ref{defi:pcpo-crs}. 
It is a categorical rewriting system which is functorial. 
\end{rem}

\begin{rem}
\label{rem:lab-norm}
When dealing with labeled graphs, Remark~\ref{rem:alg-norm} on the
equivalence of $\rho$ and $\norm(\rho)$ does not hold anymore: now the
edges and the neutral nodes of $\grpol{K}$ cannot be dropped, in
general, without modifying the rewrite step.  Typically neutral nodes
in $\grpol{K}$ can be used for transfering a label from $G$ to $H$,
when this label is not provided by the matching.
\end{rem}

\begin{exmp}
\label{exam:lab-if}
The behavior of the ``{\tt if} $b$ {\tt then\dots else\dots}'' operator 
in imperative languages 
can be modelled thanks to two polarized \psqpo rewrite rules, 
one when $b$ is {\tt true} and another one when $b$ is {\tt false}. 
Here is a possible choice when $b$ is {\tt true}
(morphisms are represented via node name sharing,
for instance $r(m)=r(p)=p,m$ and $l(m)=m$): 
 $$ \begin{array}{|c|c|c|c|c|}
 \cline{1-1} \cline{3-3} \cline{5-5}   
 L && \grpol{K} && R \\
 \cline{1-1} \cline{3-3} \cline{5-5}    
   \xymatrix@R=1.5pc@C=1pc{& m : {\tt if} \ar[d] \ar[dl] \ar[dr] & \\
    n : {\tt true} & p : \nolab &  q : \nolab} & 
  \xymatrix@R=.2pc@C=2pc{ \\ 
                  & \ar[l]_{l} } & 
   \xymatrix@R=1.5pc@C=1pc{ m^- : \nolab  \\
     p^\pm : \nolab } & 
  \xymatrix@R=.2pc@C=2pc{ \\ 
                  \ar[r]^{r} &} & 
   \xymatrix@R=1.5pc@C=1pc{p,m : \nolab} \\
 \cline{1-1} \cline{3-3} \cline{5-5}    
  \end{array}   $$
These rules for modeling ``{\tt if\dots then\dots else\dots}''
are destructive, in the sense that
nodes $n$ and $q$ disappear during the rewrite step. 
Non-destructive rules can also be chosen, here is such a rule for {\tt true}. 

 $$ \begin{array}{|c|c|c|c|c|}
 \cline{1-1} \cline{3-3} \cline{5-5}   
 L && \grpol{K} && R \\
 \cline{1-1} \cline{3-3} \cline{5-5}    
     \xymatrix@R=1.5pc@C=0pc{& m \!:\! {\tt if} \ar[d] \ar[dl] \ar[dr] & \\
                                    n \!:\! {\tt true} & p \!:\!
                                    \nolab &  q \!:\! \nolab} &
              \xymatrix@R=.2pc@C=1pc{ \\ & \ar[l]_{l}}  & 
  \xymatrix@R=1.5pc@C=0pc{& m^- \!:\! \nolab & \\
                        n^\pm \!:\! {\tt true} & p^\pm \!:\! \nolab 
                        &  q^\pm \!:\! \nolab } &
  \xymatrix@R=.2pc@C=1pc{ \\ \ar[r]^{r} &} & 
  \xymatrix@R=1.5pc@C=0pc{& p,m \!:\! \nolab & \\
                        n \!:\! {\tt true} & 
                        &  q \!:\! \nolab }\\
 \cline{1-1} \cline{3-3} \cline{5-5}    
  \end{array}   $$

\end{exmp}

\section{Related work} 
\label{sec:rel}

Polarized sesqui-pushout graph rewriting (\psqpo) is a 
new way to perfom graph transformations which offers
different possibilities to clone nodes and their incident edges, in
addition to classical graph transformations (addition and deletion of nodes
and edges). In this section the \psqpo approach is compared 
with other approaches for graph transformations.
 
In \cite{DuvalEP09} an algebraic approach
of termgraph transformation, based on heterogeneous
pushouts (HPO),  has been proposed. There,
a rule is defined as a tuple $(L,R,\tau, \sigma)$ such that $L$ and
$R$ are termgraphs representing the left-hand and the right-hand sides
of the rule, $\tau$ is a mapping from the nodes of $L$ to those of $R$
and $\sigma$ is a partial function from nodes of $R$ to nodes of $L$.
The mapping $\tau$ describes how incoming edges of the nodes in $L$
are connected in $R$ (i.e., global redirection of incoming pointers),
$\tau$ is not required to be a graph morphism as in classical
algebraic approaches of graph transformation.  The role of $\sigma$ is
to indicate the parts of $L$ to be cloned.  These two functions $\tau$
and $\sigma$ have been generalized in our present approach to a span
$\spa{L}{l}{\grpol{K}}{r}{R}$ where the polarized graph 
$\grpol{K}$ is annotated with cloning indications.  Handling
termgraphs as in \cite{DuvalEP09} requires some care
to ensure the preservation of the arity (the number of outgoing edges) of a node
during a transformation process. This requirement prevents from
deletion of nodes and their incident edges in general. To ensure
preservation of node arities, the function $\tau$ is required to be
total. The problem of arity preserving does not appear in
graphs. Thus, in our context, a node, $n$ in $L$, can
actually be deleted (zero clone) with all its incident edges if, for
instance, $n$ has no antecedent in $\grpol{K}$.
With respect to cloning abilities, the HPO approach offers the
possibility to make one or more copies of a node together with its
outgoing edges. Therefore, this way of cloning nodes is limited to
the outgoing edges only and contrasts with the flexible possibilities
of cloning edges proposed in the present paper. 
In fact, whenever a graph $G$
rewrites into $H$ according to the HPO approach using a rule
$(L,R,\tau, \sigma)$, the graph $G$ can also be rewritten into $H$ according
to a rule $\spa{L}{l}{\grpol{K}}{r}{R}$ where morphisms $l$ and $r$ 
encode the functions $\tau$ and $\sigma$ as described below.

\begin{prop}
\label{prop:rel-hpo}
Let $\rho$ be an HPO rule $(L,R,\tau, \sigma)$. 
Then $L$ and $R$ are graphs,
$\tau : |L| \to |R|$ is a total function and 
$\sigma : |R| \to |L|$ is partial function. 
Let $C$ denote the domain of $\sigma$, seen as a graph without edges. 
Let us assume that every node in $C$ has no outgoing edges in $R$ 
(such nodes are kinds of variables).
Let $\rho$ be the rule $\spa{L}{l}{\grpol{K}}{r}{R}$ defined as follows 
\begin{itemize}
\item $K$ is a graph without edges and $|K| = |L| + |C|$,
\item let $n$ be in $|L|$, then $l(n) = n$ and $r(n) = \tau(n)$, 
\item let $n$ be in $|C|$, then $l(n) = \sigma(n)$ and $r(n) = n$,
\item $|K|^+ = |C|$:
  nodes in $C$ are dedicated to make clones with outgoing edges only,  
\item $|K|^- = |L|$:
   nodes of $L$ undergo global redirection of incoming pointers.
\end{itemize}
Then, for an injective matching $m : L \to G$, 
$\step{G}{hpo}{H}$ implies $\step{G}{\arpsqpo}{H}$. 
\end{prop}


Cloning is also one of the features of the sesqui-pushout approach to
graph transformation \cite{CorradiniHHK06}. In this approach, a rule
is a span of graphs $L \leftarrow K \to R$ and the application of
a rule to a matching of $L$ in a graph $G$ (as illustrated in the introduction)
is made of a final pullback complement followed 
by a pushout, both of them in the category of graphs. 
The sesqui-pushout
approach and ours mainly differ in the way of handling cloning.
In \cite{CorradiniHHK06}, the cloning of a node is performed 
by copying all incident edges (incoming
and outgoing edges) of the cloned node. This is a particular case of
our way of cloning nodes.  The use of polarized graphs helped us
to specify for every clone, the way incident edges can be copied.
Therefore, a sesqui-pushout rewrite step can be simulated by a
rewrite step with polarized rules, but the converse does not hold in
general.

\begin{prop}
\label{prop:rel-sqpo}
Let  $\rho$ be the SqPO rewrite rule $L \leftarrow K \to R$.
Let  $\rho'$ be the \psqpo rule $\spa{L}{l}{\polar(K)}{r}{R}$. 
Then, for every injective matching $m : L \to G$, 
\\ $\step{G}{sqpo}{H}$ implies $\step{G}{\arpsqpo}{H}$.
\end{prop}


In \cite{CorradiniHHK06}, the sesqui-pushout approach is compared 
to the classical double pushout and single pushout approaches. 
The authors show that the sesqui-pushout and the DPO approaches 
coincide under some conditions \cite[Proposition~12]{CorradiniHHK06}. 
They also show how the sesqui-pushout approach can be simulated by 
the SPO approach and they give conditions under which a SPO derivation 
can be simulated by a sesqui-pushout \cite[Proposition~13]{CorradiniHHK06}. 
So, according to Proposition~\ref{prop:rel-sqpo}, which 
shows how to simulate a sesqui-pushout step in our setting, we can
infer the same comparisons with respect to DPO and SPO for our
graph rewriting definition.

Cloning is also subject of interest in \cite{DrewesHJME06}. The
authors consider rewrite rules of the form $S \!:=\! R$ where $S$ is a
\emph{star}, i.e., $S$ is a (nonterminal) node surrounded by its adjacent
nodes together with the edges that connect them. Rewrite rules which
perform the cloning of a node are given in
\cite[Def.~6]{DrewesHJME06}. These rules show how a star can be
removed, kept identical to itself or copied (cloned) more than once. 
Here again, unlike our approach, the cloning of a node,
as in the case of the sesqui-pushout approach, copies a node
together with all its incoming and outgoing edges. 
Recently,
L\"{o}we proposed in \cite{Lowe10} a new general framework of graph
rewriting in span-categories. He has shown how classical algebraic
graph transformation approaches can be seen as instances of his
framework. Our approach, which is close to the sesqui-pushout
rewriting, could be presented too as an instance of L\"{o}we's
framework up to some particular considerations due to the use of two
kinds of graphs in our spans, namely polarized and not polarized
graphs.  Details of the instance, including the complete definitions
of abstract spans and matching of abstract spans are matter of further
investigation.


\bibliographystyle{plain}

\appendix

\section{Pushouts and final pullback complements}
\label{app}

The major result of this paper is Theorem~\ref{theo:alg-step},
which is proved as a consequence of Propositions~\ref{prop:alg-po} 
on pushouts of graphs and~\ref{prop:alg-fpbc} 
on final pullback complements of polarized graphs.
This Appendix provides detailed proofs for these 
propositions and for some apparented results. 

\subsection{Definitions} 
\label{subapp:defi} 

Pushouts (PO) and pullbacks (PB)
are basic notions of category theory \cite{MacLane},
mutually dual. They are usually depicted as follows.
  $$ PO\,:\quad \xymatrix@C=4pc{ 
  K \ar[d]_{d} \ar[r]^{r} & R \ar[d]^{h} \\ 
  D \ar[r]^{r_1} & H
    \ar@{-}[]+L+<-6pt,+1pt>;[]+LU+<-6pt,+6pt> 
    \ar@{-}[]+U+<-1pt,+6pt>;[]+LU+<-6pt,+6pt> 
 \\ 
  } \qquad\qquad 
  PB\,:\quad \xymatrix@C=4pc{ 
  L \ar[d]_{m} & K \ar[l]_{l} \ar[d]^{d}
    \ar@{-}[]+L+<-6pt,-1pt>;[]+LD+<-6pt,-6pt> 
    \ar@{-}[]+D+<-1pt,-6pt>;[]+LD+<-6pt,-6pt> 
   \\ 
  G & D \ar[l]_{l_1} \\ 
  } $$
In a category, the \emph{pushout} of $D\lupto{d}K\rupto{r}R$
is $D\rupto{r_1}H\lupto{h}R$ such that $h\circ r = r_1\circ d$
and for every $D\rupto{r'_1}H'\lupto{h'}R$ such that $h'\circ r = r'_1\circ d$
there is a unique $\eta:H\to H'$ such that $\eta \circ r_1 = r'_1$ and $\eta \circ h = h'$.
Dually, the \emph{pullback} of $L\rupto{m}G\lupto{l_1}D$ 
is $L\lupto{l}K\rupto{d}D$ such that $m\circ l = l_1\circ d$
and for every $L\lupto{l'}K'\rupto{d'}D$ such that $m\circ l' = l_1\circ d'$
there is a unique $\kappa:K'\to K$ such that $l \circ \kappa = l'$ and $d \circ \kappa = d'$.
Due to their universality property, 
the pushout of $D\lupto{d}K\rupto{r}R$ and the pullback of $L\rupto{m}G\lupto{l_1}D$,
when they exist, are unique up to isomorphism. 

When $L\lupto{l}K\rupto{d}D$ is the pullback of $L\rupto{m}G\lupto{l_1}D$,
as above, then $G\lupto{l_1}D\lupto{d}K $ is called a \emph{pullback complement} (PBC) 
of $G\lupto{m}L\lupto{l}K$. 
When they exist, there may be several non-isomorphic 
pullback complements of $G\lupto{l_1}D\lupto{d}K$.
The \emph{final pullback complement} (FPBC) of $G\lupto{m}L\lupto{l}K$ is 
a pullback complement $G\lupto{l_1}D\lupto{d}K $ of $G\lupto{m}L\lupto{l}K$
such that for every 
pullback complement $G\lupto{l'_1}D'\lupto{d'}K $ of $G\lupto{m}L\lupto{l}K$
there is a unique $\delta:D'\to D$ such that $\delta \circ d'=d$ 
and $l_1\circ\delta = l'_1$.
Because of its terminality property, when it exists 
the FPBC of $G\lupto{m}L\lupto{l}K$ is unique up to isomorphism.
In order to insist on this terminality property, 
in this paper the FPBC is depicted as follows.
  $$  FPBC\,:\quad \xymatrix@C=4pc{ 
  L \ar[d]_{m} & K \ar[l]_{l} \ar[d]^{d} 
    \ar@{-}[]+L+<-6pt,-1pt>;[]+LD+<-6pt,-6pt> 
    \ar@{-}[]+D+<-1pt,-6pt>;[]+LD+<-6pt,-6pt> 
  \\ 
  G & D \ar[l]_{l_1}
    \ar@{-}[]+L+<-6pt,+1pt>;[]+LU+<-6pt,+6pt> 
    \ar@{-}[]+L+<-8pt,+1pt>;[]+LU+<-8pt,+8pt> 
    \ar@{-}[]+U+<-1pt,+6pt>;[]+LU+<-6pt,+6pt> 
    \ar@{-}[]+U+<-1pt,+8pt>;[]+LU+<-8pt,+8pt> 
   \\ 
  }$$

The propositions in this Appendix are made of explicit constructions, 
which can be proved by simple verifications. 
For the constructions on graphs and polarized graphs
we use well-known results about pointwise construction of limits and colimits.
All the constructions are done up to isomorphism.

\subsection{PO, PBC and FPBC of sets}
\label{subapp:set}

In the category of sets, there are similarities between 
PO and FPBC. 
Let $\Set$ denote the category of sets,
let ``$+$'' denote the sum (or disjoint union) of sets,
and for each set $Y$ with a subset $X$ let $\ol{X}$ 
denote the complement of $X$ in $Y$, so that $Y=X+\ol{X}$.
The symbol ``$+$'' also denotes the sum of functions:
when $f_1:X_1\to Y_1$ and $f_2:X_2\to Y_2$, then 
$f_1+f_2:X_1+X_2\to Y_1+Y_2$ is defined piecewise from $f_1$ and $f_2$.  

\begin{prop}[PO of sets]
\label{app-prop:set-po} 
Let $r:K\to R$ be a function and $d:K\to D$ an inclusion, so that
$D=K+\ol{K}$. The pushout of $d$ and $r$ in $\Set$ is the following
square, where $h:R\to R+\ol{K}$ is the inclusion:
  $$ \xymatrix@C=5pc{ 
  K \ar[d]_{d} \ar[r]^{r} & R \ar[d]^{h} \\ 
  D=K+\ol{K} \ar[r]_{r_1=r+\id_{\ol{K}}} & H=R+\ol{K}
    \ar@{-}[]+L+<-6pt,+1pt>;[]+LU+<-6pt,+6pt> 
    \ar@{-}[]+U+<-1pt,+6pt>;[]+LU+<-6pt,+6pt> 
 \\ 
  }$$
\end{prop}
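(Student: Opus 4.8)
The plan is to verify directly that the displayed square satisfies the two defining conditions of a pushout in $\Set$ as stated in the appendix: commutativity of the square, and the universal (terminality) property of the cocone. Since $d$ and $h$ are the canonical inclusions and $r_1 = r + \id_{\ol{K}}$ is defined piecewise, every verification reduces to checking separately on the two summands $K$ and $\ol{K}$ of $D = K + \ol{K}$, and on the two summands $R$ and $\ol{K}$ of $H = R + \ol{K}$.

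First I would check commutativity, i.e. $h \pol r = r_1 \pol d$. On an element $x \in K$, the inclusion $d$ sends $x$ to the $K$-component of $D$, where $r_1$ acts as $r$; thus $(r_1 \pol d)(x)$ equals $r(x)$ viewed in the $R$-component of $H$, which is exactly $(h \pol r)(x)$. This is immediate from the piecewise definition of $r_1$ as $r + \id_{\ol{K}}$.

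Next I would establish the universal property. Given any competing cocone $r'_1 : D \to H'$ and $h' : R \to H'$ with $h' \pol r = r'_1 \pol d$, a function $\eta : H \to H'$ is determined by its restrictions to the two summands $R$ and $\ol{K}$ of $H$. The constraint $\eta \pol h = h'$ forces $\eta$ to agree with $h'$ on $R$, while the constraint $\eta \pol r_1 = r'_1$ evaluated on $\ol{K} \subseteq D$ (where $r_1$ acts as the identity into the $\ol{K}$-component of $H$) forces $\eta$ to agree with $r'_1$ on $\ol{K}$. This pins down $\eta$ uniquely. Conversely, defining $\eta$ by exactly these two restrictions, I would confirm that $\eta \pol r_1 = r'_1$ also holds on the remaining $K$-summand: for $x \in K$ one has $\eta(r_1(x)) = \eta(r(x)) = h'(r(x))$, which by the cocone equation $h' \pol r = r'_1 \pol d$ equals $r'_1(d(x)) = r'_1(x)$.

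The verification is entirely routine and presents no real obstacle; the only step requiring a moment of care is the $K$-component of the equation $\eta \pol r_1 = r'_1$, where the cocone hypothesis $h' \pol r = r'_1 \pol d$ must be invoked rather than merely the definition of $\eta$. Uniqueness of $\eta$ then follows from the fact that $H = R + \ol{K}$ is a coproduct, so any map out of $H$ is completely determined by its two components.
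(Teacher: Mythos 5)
Your proposal is correct and takes essentially the same approach as the paper's own proof: determine $\eta$ uniquely from its restrictions to the summands $R$ and $\ol{K}$ of $H$, then verify $\eta\circ r_1=r'_1$ on the $K$-summand via the cocone equation $h'\circ r=r'_1\circ d$. The only difference is your explicit check of commutativity of the square, which the paper leaves implicit.
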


\begin{proof}
Given a commutative square 
  $$ \xymatrix@C=6pc{ 
  K \ar[d]_{d} \ar[r]^{r} & R \ar[d]^{h'} \\ 
  D=K+\ol{K} \ar[r]_{r'_1} & H'
 \\ 
  }$$
let us check that there is a unique function $\eta:H\to H'$ 
such that $\eta \circ h=h'$ and $\eta \circ r_1=r'_1$.
If such a $\eta$ exists, then its restriction to $R$ is $h'$
and its restriction to $\ol{K}$ is the restriction of $r'_1$ to $\ol{K}$.
These two properties determine a function $\eta:H\to H'$
which satisfies $\eta \circ h=h'$ and $\eta \circ r_1(x)=r'_1(x)$ for each $x\in\ol{K}$.
We have to check that $\eta \circ r_1(x)=r'_1(x)$ for each $x\in K$.
This follows from the equalities 
$\eta \circ r_1\circ d= \eta \circ h\circ r = h' \circ r = r'_1\circ d$.
\end{proof}

\begin{prop}[PBC of sets]
\label{app-prop:set-pb} 
Let $l:K\to L$ be a function and $m:L\to G$ an inclusion, 
so that $G=L+\ol{L}$.
The pullback complements of $l$ and $m$ in $\Set$ are the following squares,
where $D$ is any set containing $K$ (so that $D=K+\ol{K}$), 
$d$ is the inclusion and $\ol{l}:\ol{K}\to\ol{L}$ is any function:
  $$ \xymatrix@C=6pc{ 
  L \ar[d]_{m} & K \ar[l]_{l} \ar[d]^{d}
    \ar@{-}[]+L+<-6pt,-1pt>;[]+LD+<-6pt,-6pt> 
    \ar@{-}[]+D+<-1pt,-6pt>;[]+LD+<-6pt,-6pt> 
   \\ 
  G=L+\ol{L} & D=K+\ol{K} \ar[l]^{l_1=l+\ol{l}} \\ 
  } $$
\end{prop}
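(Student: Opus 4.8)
The plan is to prove the two inclusions that are packaged together in the statement: first, that every square of the displayed shape is genuinely a pullback complement, and second, that conversely every pullback complement of $l$ and $m$ arises in this way (up to bijection). Throughout I would use the explicit description of pullbacks in $\Set$ together with the decomposition $G=L+\ol{L}$ in which $m$ is the canonical inclusion, so that $L$ and $\ol{L}$ are disjoint inside $G$. Recall that, with the conventions of the appendix, verifying that $G\lupto{l_1}D\lupto{d}K$ is a pullback complement of $G\lupto{m}L\lupto{l}K$ amounts to verifying that $(l,d)$ is the pullback of $(m,l_1)$.

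For the first direction I would fix $D=K+\ol{K}$ with $d$ the canonical inclusion and an arbitrary $\ol{l}:\ol{K}\to\ol{L}$, and set $l_1=l+\ol{l}$. Commutativity $m\circ l=l_1\circ d$ is immediate, since both composites send $x\in K$ to $l(x)\in L\subseteq G$. For the universal property I would take an arbitrary competing cone $l':K'\to L$, $d':K'\to D$ with $m\circ l'=l_1\circ d'$ and produce the unique mediating $\kappa:K'\to K$. The key observation is that for $y\in K'$ the element $l_1(d'(y))=m(l'(y))$ lies in $L$; since $l_1$ sends $\ol{K}$ into $\ol{L}$ and $L\cap\ol{L}=\emptyset$, this forces $d'(y)\in K$. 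Hence $\kappa:=d'$, corestricted to $K$, is well defined and satisfies $d\circ\kappa=d'$ and $l\circ\kappa=l'$ (the latter because $l_1$ restricts to $l$ on $K$), and it is the unique such map because $d$ is injective.

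For the converse, let $G\lupto{l_1}D\lupto{d}K$ be any pullback complement, so that $(l,d)$ is the pullback of $(m,l_1)$. Since $m$ is a monomorphism, so is its pullback $d$; identifying $K$ with $d(K)$, I may write $D=K+\ol{K}$ with $\ol{K}=D\setminus K$ and $d$ the canonical inclusion. I would then compute the pullback of $(m,l_1)$ explicitly in $\Set$: as $m$ is the inclusion of $L$, the pullback is carried by $l_1^{-1}(L)\subseteq D$, its projection to $D$ being the inclusion $l_1^{-1}(L)\hookrightarrow D$. By uniqueness of the pullback up to isomorphism this projection agrees with $d$, whence $K=l_1^{-1}(L)$ as a subset of $D$. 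Consequently $\ol{K}=D\setminus K=l_1^{-1}(\ol{L})$, so $l_1$ restricts to a map $\ol{l}:\ol{K}\to\ol{L}$; since $l_1\circ d=m\circ l$ also shows that $l_1$ restricts to $l$ on $K$, we conclude $l_1=l+\ol{l}$, which is exactly the displayed form.

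I expect the only delicate step to be this last direction, and specifically the identification of the pullback object with the preimage $l_1^{-1}(L)$, followed by the conclusion that $d$ embeds $K$ onto precisely that preimage. Everything else reduces to the disjointness of $L$ and $\ol{L}$ in $G$ and to the injectivity of the canonical inclusions. One minor point to handle with care is the phrase ``up to bijection'': the freedom in the choice of $\ol{K}$ is genuine, since distinct cardinalities yield non-isomorphic yet equally valid pullback complements. This is exactly why the statement quantifies over all $D\supseteq K$ and all $\ol{l}$ rather than asserting uniqueness, in contrast with the pushout of Proposition~\ref{app-prop:set-po}.
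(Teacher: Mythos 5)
Your proof is correct and takes essentially the same approach as the paper: a direct verification of the pullback property using the disjointness of $L$ and $\ol{L}$ in $G$ (so that $d'$ corestricts to $K$ and serves as the mediating map), followed by the converse via stability of monomorphisms under pullback. The only local difference is the final step of the converse, where the paper shows $l_1(\ol{K})\subseteq\ol{L}$ by contradiction --- testing the pullback property against a one-point set $\{*\}$ --- while you instead identify $d(K)$ with $l_1^{-1}(L)$ by comparing with the canonical construction of pullbacks in $\Set$ and invoking uniqueness of pullbacks; both arguments are equally valid, and your handling of the first direction even makes explicit a point the paper glosses over (why $d'(K')\subseteq K$ holds before $\kappa$ is known to exist).
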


\begin{proof}
First, let us prove that the square in the proposition is a pullback square. 
Clearly it is commutative. 
Given a commutative square 
  $$ \xymatrix@C=6pc{ 
  L \ar[d]_{m} & K' \ar[l]_{l'} \ar[d]^{d'}
   \\ 
  G=L+\ol{L} & D=K+\ol{K} \ar[l]^{l_1=l+\ol{l}} \\ 
  } $$
let us check that there is a unique function $\kappa:K'\to K$
such that $l\circ \kappa=l'$ and $d\circ \kappa=d'$.
If such a $\kappa$ exists, since $d$ is the inclusion, 
the image of $d'$ is in $K$ and $\kappa(x')=d'(x')$ for each $x'\in K'$. 
This determines a function $\kappa:K'\to K$ which satisfies $d\circ \kappa=d'$.
We have to check that $l\circ \kappa=l'$, 
or equivalently, since $m$ is an inclusion, that $m\circ l\circ \kappa=m\circ l'$.
This follows from the equalities 
$m\circ l\circ \kappa = l_1\circ d\circ \kappa = l_1\circ d' = m\circ l'$.

Now, let us check that every pullback complement of $l$ and $m$ has this form.
Given a pullback square 
  $$ \xymatrix@C=6pc{ 
  L \ar[d]_{m} & K \ar[l]_{l} \ar[d]^{d'}
    \ar@{-}[]+L+<-6pt,-1pt>;[]+LD+<-6pt,-6pt> 
    \ar@{-}[]+D+<-1pt,-6pt>;[]+LD+<-6pt,-6pt> 
   \\ 
  G=L+\ol{L} & D' \ar[l]^{l'_1} \\ 
  } $$
let us check that it has the same form as in the proposition.
Since monomorphisms are stable under pullbacks, 
the function $d'$ is a monomorphism, so that up to isomorphism 
we may assume that $d'$ is the inclusion of $K$ in $D'=K+\ol{K'}$,
where $\ol{K'}$ is the complement of $K$ in $D'$.
Since the square is commutative, the restriction of $l'_1$ to $K$ is $l$. 
Finally, let us check that $l'_1$ maps $\ol{K'}$ to $\ol{L}$:
otherwise, there is some $x'\in \ol{K'}$ such that $l'_1(x')=x$ for some $x\in L$, 
then we get a commutative square
  $$ \xymatrix@C=6pc{ 
  L \ar[d]_{m} & \{*\} \ar[l]_{*\mapsto x} \ar[d]^{*\mapsto x'}
   \\ 
  G=L+\ol{L} & D' \ar[l]^{l'_1} \\ 
  } $$
but there is no $\kappa:\{*\}\to K$ such that $\kappa(*)=x'$,
which contradicts the pullback complement property of $D'$.
\end{proof}

\begin{prop}[FPBC of sets]
\label{app-prop:set-pc} 
Let $l:K\to L$ be a function and $m:L\to G$ an inclusion, 
so that $G=L+\ol{L}$.
The FPBC of $l$ and $m$ is the following square:
 $$  \xymatrix@C=6pc{ 
  L \ar[d]_{m} & K \ar[l]_{l} \ar[d]^{d} 
    \ar@{-}[]+L+<-6pt,-1pt>;[]+LD+<-6pt,-6pt> 
    \ar@{-}[]+D+<-1pt,-6pt>;[]+LD+<-6pt,-6pt> 
  \\ 
  G=L+\ol{L} & D=K+\ol{L} \ar[l]^{l_1=l+\id_{\ol{L}}}
    \ar@{-}[]+L+<-6pt,+1pt>;[]+LU+<-6pt,+6pt> 
    \ar@{-}[]+L+<-8pt,+1pt>;[]+LU+<-8pt,+8pt> 
    \ar@{-}[]+U+<-1pt,+6pt>;[]+LU+<-6pt,+6pt> 
    \ar@{-}[]+U+<-1pt,+8pt>;[]+LU+<-8pt,+8pt> 
   \\ 
  }$$
\end{prop}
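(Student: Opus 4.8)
The plan is to prove the statement in two stages, matching the definition of pushback as terminal pullback complement recalled above: first I would check that the displayed square is a pullback complement of $l$ and $m$, and then I would verify its terminality among all such complements.

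For the first stage, I would invoke Proposition~\ref{app-prop:set-pb} with the specific choice $\ol{K}=\ol{L}$ and $\ol{l}=\id_{\ol{L}}$. That proposition already classifies the pullback complements of $l$ and $m$ as the squares with $D=K+\ol{K}$, with $d$ the canonical inclusion, and with $l_1=l+\ol{l}$ for an arbitrary set $\ol{K}$ and arbitrary function $\ol{l}:\ol{K}\to\ol{L}$. Taking $\ol{K}=\ol{L}$ and $\ol{l}=\id_{\ol{L}}$ yields exactly $D=K+\ol{L}$ and $l_1=l+\id_{\ol{L}}$, so the displayed square is indeed one of these pullback complements.

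For the terminality stage, I would start from an arbitrary pullback complement $G\lupto{l'_1}D'\lupto{d'}K$ of $l$ and $m$. Again by Proposition~\ref{app-prop:set-pb}, up to isomorphism $D'=K+\ol{K'}$ with $d'$ the canonical inclusion and $l'_1=l+\ol{l'}$ for some function $\ol{l'}:\ol{K'}\to\ol{L}$. I would then define $\delta:D'\to D$ by $\delta=\id_K+\ol{l'}$, that is, $\delta$ is the identity on $K$ and sends $x\in\ol{K'}$ to $\ol{l'}(x)\in\ol{L}$. A direct check gives $\delta\circ d'=d$ (both are the inclusion of $K$) and $l_1\circ\delta=l'_1$ (on $K$ both restrict to $l$, and on $\ol{K'}$ we have $l_1\circ\ol{l'}=\id_{\ol{L}}\circ\ol{l'}=\ol{l'}$, the restriction of $l'_1$).

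The step I expect to require the most care is the uniqueness of $\delta$, and this is precisely where the characteristic feature of the pushback (namely $\ol{K}=\ol{L}$ with $\ol{l}=\id_{\ol{L}}$) is used. Suppose $\delta'$ also satisfies $\delta'\circ d'=d$ and $l_1\circ\delta'=l'_1$. The first equation forces $\delta'$ to agree with the inclusion on $K$. For $x\in\ol{K'}$, the element $\delta'(x)$ lies in $D=K+\ol{L}$; if it were in $K$ then $l_1(\delta'(x))=l(\delta'(x))\in L$, whereas $l'_1(x)=\ol{l'}(x)\in\ol{L}$, contradicting $L\cap\ol{L}=\emptyset$. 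Hence $\delta'(x)\in\ol{L}$, on which $l_1$ restricts to the identity, so $\delta'(x)=l_1(\delta'(x))=l'_1(x)=\ol{l'}(x)=\delta(x)$. Thus $\delta'=\delta$, which establishes the terminality property and hence that the displayed square is the pushback.
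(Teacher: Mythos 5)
Your proof is correct and follows essentially the same route as the paper: both reduce to Proposition~\ref{app-prop:set-pb} to recognize the square as a pullback complement (taking $\ol{K}=\ol{L}$, $\ol{l}=\id_{\ol{L}}$) and to parametrize an arbitrary complement as $D'=K+\ol{K'}$ with $l'_1=l+\ol{l'}$, then exhibit $\delta=\id_K+\ol{l'}$ as the mediating map. The only difference is that you spell out the existence and uniqueness verification (in particular the disjointness argument forcing $\delta'(x)\in\ol{L}$) which the paper dismisses as ``easy to check,'' and your verification is accurate.
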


\begin{proof}
Clearly from Proposition~\ref{app-prop:set-pb}, this square is a pullback.
In order to prove that it is final, using Proposition~\ref{app-prop:set-pb},
we have to prove that for every $D'=K+\ol{K'}$
and $l'_1=l+\ol{l'}:D'\to G$ 
there is unique $\delta:D'\to D$ 
such that $\delta$ is the identity on $K$ and $l_1\circ\delta=l'_1$ on $K+\ol{K'}$.
It is easy to check that $\delta=\id_K+\ol{l'}$ is the unique function
which satisfies these properties.
  $$ \xymatrix@C=5pc{ 
  L \ar[d]_{m} & K \ar[l]_{l} \ar[d]^{d} \ar@/^3ex/[ddr]^{d'} &  \\ 
  G=L+\ol{L} & D=K+\ol{L} \ar[l]_{l_1=l+\id_{\ol{L}}} & \\ 
  && D'=K+\ol{K'} \ar@<1ex>@/^3ex/[llu]^{l'_1=l+\ol{l'}}
     \ar@<.5ex>[lu]_(.6){\;\delta=\id_K+\ol{l'}}\\ 
  }$$
\end{proof}

\begin{exmp}
\label{app-exam:set}
Here is a PBC and the FPBC of given $m$ and $l$.
Only the sets are represented,
the names of their elements describe the functions:
$m$ and $d$ are inclusions, while $l$ and $l_1$ drop the index, if any 
(every $x_i$ is mapped to $x$).
In this example, for any given $k\in\bN$  
there is a pullback complement for each $\ell\in\bN$ (in the middle) 
and the FPBC corresponds to $\ell=1$ (on the right).
$$ \begin{array}{|c|c|}
\hline 
L & K \\
\hline 
G & D\\
\hline 
\end{array}
\qquad\qquad
\begin{array}{|c|c|}
\hline 
n & n_1\dots n_{k} \\
\hline 
n & n_1\dots n_{k} \\
p & p_1\dots p_{\ell} \\
\hline 
\end{array}
\qquad\qquad
\begin{array}{|c|c|}
\hline 
n & n_1\dots n_{k} \\
\hline 
n & n_1\dots n_{k} \\
p & p \\
\hline 
\end{array} $$
\end{exmp}

\subsection{PO, PBC and FPBC of graphs} 
\label{subapp:gra} 

As in the main text, 
for each graph $X$ 
the sets of nodes and edges of $X$ are denoted 
respectively $\nod{X}$ and $\edg{X}$, 
and for all nodes $n$ and $p$ 
the set of edges from $n$ to $p$ in $X$ is denoted $\edgnp{X}{n}{p}$.
In addition for every morphism of graphs $f:X\to Y$,
for every $n_X,p_X\in \nod{X}$ and $n_Y,p_Y\in \nod{Y}$
the restrictions of $f$ are denoted: 
$$ \edgnp{f}{n_X}{p_X} : \edgnp{X}{n_X}{p_X} \to \edgnp{Y}{f(n_X)}{f(p_X)}
\; \mbox{ and }\; 
 \edgnp{f}{n_Y}{p_Y} : \sum_{n\in f^{-1}(n_Y), p\in f^{-1}(p_Y)} \edgnp{X}{n}{p} 
   \to \edgnp{Y}{n_Y}{p_Y} \;.$$

A graph $X$ may be represented informally as follows,
where $\src_X$ and $\tgt_X$ represent the source and target functions:
$$  \xymatrix@C=4pc{ 
   \edg{X} \ar@/^/[r]^{\src_X} \ar@/_/[r]_{\tgt_X}& \nod{X} \\ 
} $$
Let $\SkGr$ denote the following category (the identity arrows 
are omitted):
$$  \xymatrix@C=4pc{ 
   E \ar@/^/[r]^{\src} \ar@/_/[r]_{\tgt}& N \\ 
} $$
Then, a graph may be identified to a functor from $\SkGr$ to $\Set$.
More precisely, 
the category of graphs $\Gr$ may be identified to the category of functors from 
$\SkGr$ to $\Set$. 
It follows that limits and colimits of graphs may be computed pointwise \cite{MacLane}.

As in the main text (cf. Remark \ref{rem:gra-sum}), 
an inclusion of graphs $m:X\to Y$ 
gives rise (up to isomorphism) to a decomposition $Y=(X+\ol{X})+_e\ti{X}$. 
The edges in $\ti{X}$ are called the \emph{linking edges}.

\begin{prop}[PO of graphs. This is Proposition~\ref{prop:alg-po} in the main text]
\label{app-prop:gra-po}
Let $r:K\to R$ be a morphism and $d:K\to D$ an inclusion of graphs, 
so that $D=(K+\ol{K})+_e\ti{K}$.
The pushout of $d$ and $r$ in $\Gr$ is the following square,
where $h$ is the inclusion: 
  $$ \xymatrix@C=6pc{ 
  K \ar[d]_{d} \ar[r]^{r} & R \ar[d]^{h} \\ 
  D=(K+\ol{K})+_e\ti{K} \ar[r]_{r_1=(r+\id_{\ol{K}})+_e\ti{r}} & 
    H=(R+\ol{K})+_e\ti{R}   
    \ar@{-}[]+L+<-6pt,+1pt>;[]+LU+<-6pt,+6pt> 
    \ar@{-}[]+U+<-1pt,+6pt>;[]+LU+<-6pt,+6pt> 
  \\ 
  }$$
where: 
\begin{itemize}
\item $ \edgnp{\ti{R}}{n}{p} = 
  \sum_{n_D\in r_1^{-1}(n),p_D\in r_1^{-1}(p)} \edgnp{\ti{K}}{n_D}{p_D}$ 
   for all $n,p\in\nod{H}$, 
\item and $\ti{r}:\ti{K}\to\ti{R}$ maps $n_D\rupto{e} p_D$ to $r_1(n_D)\rupto{e} r_1(p_D)$.
\end{itemize}
\end{prop} 

\begin{proof}
Since $D=(K+\ol{K})+_e\ti{K}$, we have 
$\nod{D}=\nod{K}+\nod{\ol{K}}$ and $\edg{D}=\edg{K}+\edg{\ol{K}}+\edg{\ti{K}}$.
Since a pushout of graphs can be computed pointwise,
let us use Proposition~\ref{app-prop:set-po}.
On nodes, we get $\nod{H}=\nod{R}+\nod{\ol{K}}$ 
with $h$ the inclusion and $r_1=r+\id_{\ol{K}}$.
On edges, we get $\edg{H}=\edg{R}+\edg{\ol{K}}+\edg{\ti{R}}$ 
where $\edg{\ti{R}}=\edg{\ti{H}}$, 
with $h$ the inclusion and $r_1=r+\id_{\edg{\ol{K}}}+\id_{\edg{\ti{K}}}$.
The source and target functions for $H$
coincide with the source and target functions for $R$ and for $\ol{K}$ 
on the subgraphs $R$ and $\ol{K}$, respectively. 
For every edge $e:n_D\to p_D$ in $\ti{K}$, its image in $H$
is $e:r_1(n_D)\to r_1(p_D)$ in $\ti{R}$.
\end{proof}

\begin{prop}[PBC of graphs]
\label{app-prop:gra-pb} 
Let $l:K\to L$ be a morphism and $m:L\to G$ an inclusion of graphs. 
The pullback complements of $l$ and $m$ in $\Gr$ are the following squares,
where $d$ is the inclusion: 
  $$ \xymatrix@C=6pc{ 
  L \ar[d]_{m} & K \ar[l]_{l} \ar[d]^{d} 
    \ar@{-}[]+L+<-6pt,-1pt>;[]+LD+<-6pt,-6pt> 
    \ar@{-}[]+D+<-1pt,-6pt>;[]+LD+<-6pt,-6pt>  
  \\ 
  G=(L+\ol{L})+_e\ti{L} & D=(K+\ol{K})+_e\ti{K} \ar[l]^{l_1=(l+\ol{l})+_e\ti{l}} \\ 
  }$$
where: 
\begin{itemize}
\item $\ol{K}$ is any graph and 
$\ti{K}$ is any graph such that $\nodti{K}\subseteq\nod{K}+\nodol{K}$, 
\item and $\ol{l}:\ol{K}\to\ol{L}$ is any morphism of graphs and 
$\ti{l}:\ti{K}\to\ti{L}$ is any morphism of graphs which coincides with 
$l+\ol{l}$ on the nodes.
\end{itemize}
\end{prop} 

\begin{proof}
Since $G=(L+\ol{L})+_e\ti{L}$, we have 
$\nod{G}=\nod{L}+\nod{\ol{L}}$ and $\edg{G}=\edg{L}+\edg{\ol{L}}+\edg{\ti{L}}$.
Since monomorphisms are stable under pullback, 
the pullback complements of $l$ and $m$ are such that 
$d:K\to D$ is a monomorphism. 
Hence, up to isomorphism, $D=(K+\ol{K})+_e\ti{K}$,
so that $\nod{D}=\nod{K}+\nod{\ol{K}}$ and $\edg{D}=\edg{K}+\edg{\ol{K}}+\edg{\ti{K}}$, 
and $d$ is the inclusion.
We still have to prove that $l_1=(l+\ol{l})+_e\ti{l}$.
Since a pullback of graphs can be computed pointwise,
Proposition~\ref{app-prop:set-pb} tells us that 
on nodes $l_1=l+\ol{l}$ for any function $\ol{l}:\nod{\ol{K}}\to\nod{\ol{L}}$
and that on edges $l_1=l+\ol{l'}$  for any function 
$\ol{l'}:\edg{\ol{K}}+\edg{\ti{K}}\to\edg{\ol{L}}+\edg{\ti{L}}$.
In addition, an edge $e:n_D\to p_D$ in $\edg{\ol{K}}+\edg{\ti{K}}$ 
is in $\edg{\ol{K}}$ if and only if both $n_D$ and $p_D$ are in $\nod{\ol{K}}$,
and similarly an edge $e:n_G\to p_G$ in $\edg{\ol{L}}+\edg{\ti{L}}$ 
is in $\edg{\ol{L}}$ if and only if both $n_G$ and $p_G$ are in $\nod{\ol{L}}$. 
Since $l_1=l+\ol{l}$ on nodes, it follows that $\ol{l'}=\ol{l}+\ti{l}$ with 
$\ol{l}:\edg{\ol{K}}\to\edg{\ol{L}}$ and $\ti{l}:\edg{\ti{K}}\to\edg{\ti{L}}$.
\end{proof}

\begin{prop}[FPBC of graphs] 
\label{app-prop:gra-pc} 
Let $l:K\to L$ be a morphism and $m:L\to G$ an inclusion of graphs. 
The FPBC of $l$ and $m$ is the following square,
where $d$ is the inclusion: 
  $$ \xymatrix@C=6pc{ 
  L \ar[d]_{m} & K \ar[l]_{l} \ar[d]^{d} 
    \ar@{-}[]+L+<-6pt,-1pt>;[]+LD+<-6pt,-6pt> 
    \ar@{-}[]+D+<-1pt,-6pt>;[]+LD+<-6pt,-6pt>  
\\ 
  G=(L+\ol{L})+_e\ti{L} & D=(K+\ol{L})+_e\ti{K} \ar[l]^{l_1=(l+\id_{\ol{L}})+_e\ti{l}} 
    \ar@{-}[]+L+<-6pt,+1pt>;[]+LU+<-6pt,+6pt> 
    \ar@{-}[]+L+<-8pt,+1pt>;[]+LU+<-8pt,+8pt> 
    \ar@{-}[]+U+<-1pt,+6pt>;[]+LU+<-6pt,+6pt>  
    \ar@{-}[]+U+<-1pt,+8pt>;[]+LU+<-8pt,+8pt> 
  \\ 
  }$$
where: 
\begin{itemize}
\item $ \edgnp{\ti{K}}{n_D}{p_D} = 
   \edgnp{\ti{L}}{l_1(n_D)}{ l_1(p_D)}$
   for all $n_D,p_D\in\nod{D} $, 
\item and $\ti{l}$ maps $n_D\rupto{e}p_D$ to $l_1(n_D)\rupto{e}l_1(p_D)$.
\end{itemize}
\end{prop} 

\begin{proof} 
This proof generalizes the proof of Proposition~\ref{app-prop:set-pc}.
Clearly from Proposition~\ref{app-prop:gra-pb}, this square is a pullback.
In order to prove that it is final, using Proposition~\ref{app-prop:gra-pb},
we have to prove that for every $D'=(K+\ol{K'})+_e\ti{K'}$
and $l'_1=(l+\ol{l'})+_e\ti{l'}:D'\to G$ 
there is a unique $\delta:D'\to D$ 
such that $\delta$ is the identity on $K$ and $l_1\circ\delta=l'_1$.
Since $l_1=(l+\id_{\ol{L}})+_e\ti{l}$ and $l'_1=(l+\ol{l'})+_e\ti{l'}$, 
this means that $\delta=(\id_K+\ol{l'})+_e\ti{\delta}$ with 
$\ti{l}\circ\ti{\delta}=\ti{l'}:\ti{K'}\to\ti{L}$.
So, $\delta$ is uniquely determined on $K+\ol{K'}$, hence on the nodes of $D'$, 
and we still have to check that the equality $\ti{l}\circ\ti{\delta}=\ti{l'}$
determines a unique $\ti{\delta}:\ti{K'}\to\ti{K}$.
The equality $\ti{l}\circ\ti{\delta}=\ti{l'}$ is equivalent to the family of equalities
$\edgnp{\ti{l}}{n_D}{p_D} \circ \edgnp{\ti{\delta}}{n_{D'}}{p_{D'}} = 
 \edgnp{\ti{l'}}{n_{D'}}{p_{D'}}$
for all nodes $n_{D'}$ and $p_{D'}$ in $D'$,  
with $n_D=\delta(n_{D'})$ and $p_D=\delta(p_{D'})$ in $D$.
Since $\edgnp{\ti{l}}{n_D}{p_D}$ is a bijection, this is equivalent to  
$\edgnp{\ti{\delta}}{n_{D'}}{p_{D'}} = 
\edgnp{\ti{l}}{n_D}{p_D}^{-1} \circ  \edgnp{\ti{l'}}{n_{D'}}{p_{D'}}$:
  $$ \xymatrix@C=5pc{ 
  \ti{L} \ar@/^3ex/[r]^{\edgnp{\ti{l}}{n_D}{p_D}^{-1}} & 
    \ti{K} \ar[l]^{\edgnp{\ti{l}}{n_D}{p_D}} & \\ 
  && \ti{K'} \ar@<1ex>@/^3ex/[llu]^{\edgnp{\ti{l'}}{n_{D'}}{p_{D'}}}
     \ar@<.5ex>[lu]_(.6){\;\edgnp{\ti{\delta}}{n_{D'}}{p_{D'}}} \\ 
  }$$
This determines the morphism $\ti{\delta}$, hence $\delta$.
This proof is summarized by the diagram below: 
  $$ \xymatrix@C=5pc{ 
  L \ar[d]_{m} & K \ar[l]_{l} \ar[d]^{d} \ar@/^3ex/[ddr]^{d'} &  \\ 
  G=(L+\ol{L})+_e\ti{L} & D=(K+\ol{L})+_e\ti{K} \ar[l]_{l_1=(l+\id_{\ol{L}})+_e\ti{l}} & \\ 
  && D'=(K+\ol{K'})+_e\ti{K'} \ar@<1ex>@/^3ex/[llu]^{l'_1=(l+\ol{l'})+_e\ti{l'}~~~~~~~~~~~~~~~~~~~~}
     \ar@<.5ex>[lu]_(.6){\;\delta=(\id_K+\ol{l'})+_e\ti{\delta}}\\ 
  }$$
\end{proof}

\begin{exmp}
\label{app-exam:gra}
With the same conventions as in Example~\ref{app-exam:set},
here is a PBC (on the left) and the FPBC (on the right) of given $m$ and $l$ in $\Gr$.
$$  \begin{array}{|c|c|}
\hline 
 \; \xymatrix@=1pc{n & p \ar[l] \\ } \;  & 
 \; \xymatrix@=1pc{n_1 & p_1 \ar[l] & p_2  \\ } \; \\
\hline 
 \rule[20pt]{0pt}{0pt}
 \xymatrix@=1pc{n \ar@/^/[r] & p \ar[l] \ar[d] \\ & q \ar@/_/[u] \\ }  & 
 \xymatrix@=1pc{n_1 & p_1 \ar[l] \ar[d] \ar@/_/[d]\ar@/^/[d]  & p_2  \\ 
  & q_1  & q_2 \ar[u]  \\ }  \\  
\hline 
\end{array}\qquad\qquad
\begin{array}{|c|c|}
\hline 
 \; \xymatrix@=1pc{n & p \ar[l] \\ } \;  & 
 \; \xymatrix@=1pc{n_1 & p_1 \ar[l] & p_2  \\ } \; \\
\hline 
 \rule[20pt]{0pt}{0pt}
 \xymatrix@=1pc{n \ar@/^/[r] & p \ar[l] \ar[d] \\ & q \ar@/_/[u] \\ }  & 
 \xymatrix@=1pc{n_1 \ar@/^/[r] \ar@/^3ex/[rr] & p_1 \ar[l] \ar[d] & p_2 \ar[dl] \\ 
  & q \ar@/_/[u]  \ar@/_/[ur] \\ }  \\  
\hline 
\end{array}$$
\end{exmp}

\subsection{PBC and FPBC of polarized graphs}
\label{subapp:pol} 

A polarized graph $\grpol{X}=(X,X^\pm)$ may be represented informally
as the commutative diagram that follows,
where $\src_X$ and $\tgt_X$ represent the source and target functions
and the upward and downward arrows represent the inclusions, and where
$\src_{X}^+$ and $\tgt_{X}^+$ are used to enforce the fact that edges
have their source (resp. target) at a node polarized by $+$ (resp. by $-$).

$$  \xymatrix@C=.5pc@R=1pc{
    &&&&& \nod{X}^+ \ar[d] \\
   \edg{X} \ar@/^/[rrrrr]^{\src_X} \ar@/_/[rrrrr]_{\tgt_X} 
           \ar@/_/[rrrrrd]_{\tgt_{X}^-} \ar@/^/[rrrrru]^{\src_{X}^+}&&&&& \nod{X} & \\ 
    &&&&& \nod{X}^- \ar[u]  \\ 
} $$
Let $\SkGrpol$ denote the following category,
with $i^+\circ \src^+= \src $ and $i^-\circ \tgt^-= \tgt$ 
(the identity arrows are omitted):
$$  \xymatrix@C=.5pc@R=1pc{
   &&&&& N^+ \ar[d]^{i^+} \\ 
   E \ar@/^/[rrrrr]^{\src} \ar@/_/[rrrrr]_{\tgt} 
     \ar@/_/[rrrrrd]_{\tgt^-} \ar@/^/[rrrrru]^{\src^+}&&&&& N & \\ 
   &&&&& N^- \ar[u]_{i^-}  \\ 
} $$
A polarized graph may be identified to a functor from $\SkGrpol$ to $\Set$. 
More precisely, 
the category of polarized graphs $\Grpol$ may be identified 
to the category of functors from $\SkGrpol$ to $\Set$
which map $i^+$ and $i^-$ to inclusions. 
It follows that limits of polarized graphs may be 
computed pointwise.

As in the main text, we define a \emph{matching} of 
polarized graphs $m:\grpol{X} \to \grpol{Y}$ 
as a morphism of polarized graphs which 
\emph{strictly preserves} the polarization,
in the sense that 
$m(X^+)= m(X)\cap Y^+ $ and $m(X^-)= m(X)\cap Y^- $. 
Then $ \grpol{Y}= (\grpol{X}+\ol{\grpol{X}}) +_e \ti{\grpol{X}}$
(cf. Remark \ref{rem:polgra-sum}).

\begin{prop}[PBC of polarized graphs]
\label{app-prop:pol-pb} 
Let $l:\grpol{K} \to \grpol{L}$ be a morphism and 
$m:\grpol{L}\to \grpol{G}$ a matching of polarized graphs.
The pullback complements of $l$ and $m$ in $\Grpol$ are the following squares,
where $d$ is the inclusion:
  $$ \xymatrix@C=6pc{ 
  \grpol{L} \ar[d]_{m} & \grpol{K} \ar[l]_{l} \ar[d]^{d} 
    \ar@{-}[]+L+<-6pt,-1pt>;[]+LD+<-6pt,-6pt> 
    \ar@{-}[]+D+<-1pt,-6pt>;[]+LD+<-6pt,-6pt>  
  \\
  \grpol{G}=(\grpol{L}+\ol{\grpol{L}})+_e\ti{\grpol{L}} & 
     \grpol{D}=(\grpol{K}+\ol{\grpol{K}})+_e\ti{\grpol{K}} \ar[l]^{l_1=(l+\ol{l})+_e\ti{l}} \\ 
  }$$
where:
\begin{itemize}
\item $\ol{\grpol{K}}$ is any polarized graph and 
$\ti{\grpol{K}}$ is any polarized graph such that 
$\nodti{\grpol{K}}\subseteq\nod{\grpol{K}+\ol{\grpol{K}}}$ as polarized graphs,   
\item and $\ol{l}:\ol{\grpol{K}}\to\ol{\grpol{L}}$ is any morphism of polarized graphs and  
$\ti{l}:\ti{\grpol{K}}\to\ti{\grpol{L}}$ is any morphism of polarized graphs
which coincides with $l+\ol{l}$ on nodes.
\end{itemize}
\end{prop}

\begin{proof}
This proof generalizes the proof of Proposition~\ref{app-prop:gra-pb}. 
Since monomorphisms are stable under pullback, 
the pullback complements of $l$ and $m$ are such that 
$d:\grpol{K}\to \grpol{D}$ is a monomorphism,
up to isomorphism let us assume that $d$ is the inclusion.
Let us prove that $d$ strictly preserves the polarization:
let $n_K$ be a node in $K$, with $n_L=l(n_K)$ in $L$,
such that $d(n_K)$ is polarized as $d(n_K)^+$ in $D$,
then $m(n_L)=l_1(n_K)$ is polarized as $m(n_L)^+$ in $G$, 
and since $m$ strictly preserves the polarization 
$n_L$ is also polarized as $n_L^+$ in $L$, 
which implies that $n_K$ is polarized as $n_K^+$ in $K$ because 
the square is a pullback. 
A similar result holds for nodes with negative polarization. 
So, $d$ strictly preserves the polarization, from which it follows that 
$\grpol{D}=(\grpol{K}+\ol{\grpol{K}})+_e\ti{\grpol{K}}$.
We still have to prove that $l_1=(l+\ol{l})+_e\ti{l}$.
Since a pullback of polarized graphs can be computed pointwise,
this part of the proof runs as in the proof of Proposition~\ref{app-prop:gra-pb}.
\end{proof}

\begin{prop}[FPBC of polarized graphs. 
This is Proposition~\ref{prop:alg-fpbc} in the main text]
\label{app-prop:pol-pc}
Let $l:\grpol{K} \to \grpol{L}$ be a morphism and 
$m:\grpol{L}\to \grpol{G}$ a matching of polarized graphs.
The FPBC of $l$ and $m$ is the following square, 
where $d$ is the inclusion:
  $$ \xymatrix@C=6pc{ 
  \grpol{L} \ar[d]_{m} & \grpol{K} \ar[l]_{l} \ar[d]^{d} 
    \ar@{-}[]+L+<-6pt,-1pt>;[]+LD+<-6pt,-6pt> 
    \ar@{-}[]+D+<-1pt,-6pt>;[]+LD+<-6pt,-6pt>  
 \\
   \grpol{G}=(\grpol{L}+\ol{\grpol{L}})+_e\ti{\grpol{L}} & 
     \grpol{D}=(\grpol{K}+\ol{\grpol{L}})+_e\ti{\grpol{K}} \ar[l]^{l_1=(l+\id_{\ol{\grpol{L}}})+_e\ti{l}} 
    \ar@{-}[]+L+<-6pt,+1pt>;[]+LU+<-6pt,+6pt> 
    \ar@{-}[]+L+<-8pt,+1pt>;[]+LU+<-8pt,+8pt> 
    \ar@{-}[]+U+<-1pt,+6pt>;[]+LU+<-6pt,+6pt>  
    \ar@{-}[]+U+<-1pt,+8pt>;[]+LU+<-8pt,+8pt> 
  \\
  }$$
where:
\begin{itemize}
\item $ \edgnp{\ti{K}}{n_D}{p_D} = \edgnp{\ti{L}}{l_1(n_D)}{ l_1(p_D)}$
   for all $n_D\in\nod{D}^+$ and $p_D\in\nod{D}^-$
  \\ (and otherwise $\edgnp{\ti{K}}{n_D}{p_D} = \emptyset$), 
\item and $\ti{l}$ maps $n_D\rupto{e}p_D$ to~$l_1(n_D)\rupto{e}l_1(p_D)$.
\end{itemize}
\end{prop} 

Thus, on the linking edges, the morphism $\ti{l}$ induces a bijection, 
for all $n_D\in\nod{D}^+$ and $p_D\in\nod{D}^-$:
$$  \edgnp{\ti{l}}{n_D}{p_D} :  \edgnp{\ti{K}}{n_D}{p_D} \congto 
   \edgnp{\ti{L}}{l_1(n_D)}{ l_1(p_D)} \;.
$$
If $n_D\not\in\nod{D}^+$ or $p_D\not\in\nod{D}^-$ 
then $\edgnp{\ti{K}}{n_D}{p_D}$ is empty,
but it may happen that $l_1(n_D)\in\nod{G}^+$ and $l_1(p_D)\in\nod{G}^-$
and that $\edgnp{\ti{L}}{l_1(n_D)}{ l_1(p_D)} $ is not empty.
For example:
$$ \begin{array}{|c|c|}
\hline
n^\pm & n_1^+ \qquad n_2^- \\
\hline
\xymatrix@R=1pc{n^\pm \ar[d] \\ p^\pm \\ } &
\xymatrix@R=1pc@C=1pc{n_1^+ \ar[d] & n_2^- \\ p^\pm & \\ } \\ 
\hline
\end{array}$$

\begin{proof}
This proof is similar to the proof of Proposition~\ref{app-prop:gra-pc}.
Using Proposition~\ref{app-prop:pol-pb}, 
we have to prove that for every $\grpol{D'}=(\grpol{K}+\ol{\grpol{K'}})+_e\ti{\grpol{K'}}$
and $l'_1=(l+\ol{l'})+_e\ti{l'}:\grpol{D'}\to \grpol{G}$, 
there is a unique $\delta:\grpol{D'}\to \grpol{D}$ 
such that $\delta$ is the identity on $\grpol{K}$ and $l_1\circ\delta=l'_1$.
This means that $\delta=(\id_{\grpol{K}}+\ol{l'})+_e\ti{\delta}$ with 
$\ti{l}\circ\ti{\delta}=\ti{l'}:\ti{K'}\to\ti{L}$.
So, $\delta$ is uniquely determined on $\grpol{K}+\ol{\grpol{K'}}$, 
hence on the nodes of $\grpol{D'}$, 
and we still have to check that the equality $\ti{l}\circ\ti{\delta}=\ti{l'}$
determines a unique $\ti{\delta}:\ti{K'}\to\ti{K}$.
The equality $\ti{l}\circ\ti{\delta}=\ti{l'}$ is equivalent to the family of equalities
$\edgnp{\ti{l}}{n_D}{p_D} \circ \edgnp{\ti{\delta}}{n_{D'}}{p_{D'}} = 
 \edgnp{\ti{l'}}{n_{D'}}{p_{D'}}$
for all nodes $n_{D'}$ and $p_{D'}$ in $\grpol{D'}$,  
with $n_D=\delta(n_{D'})$ and $p_D=\delta(p_{D'})$ in $\grpol{D}$.
\begin{itemize}
\item If $n_{D'}\in\nod{D'}^+$ and $p_{D'}\in\nod{D'}^-$ 
then $n_D\in\nod{D}^+$ and $p_D\in\nod{D}^-$ 
so that $\edgnp{\ti{l}}{n_D}{p_D}$ is a bijection.
Then $\edgnp{\ti{\delta}}{n_{D'}}{p_{D'}}$ is uniquely determined by 
$\edgnp{\ti{\delta}}{n_{D'}}{p_{D'}} = 
\edgnp{\ti{l}}{n_D}{p_D}^{-1} \circ \edgnp{\ti{l'}}{n_{D'}}{p_{D'}}$.
\item Otherwise $\edgnp{\ti{K'}}{n_{D'}}{p_{D'}}$ is empty, 
so that clearly $\edgnp{\ti{\delta}}{n_{D'}}{p_{D'}} $ is uniquely determined.
\end{itemize}
This yields the morphism $\ti{\delta}$, hence $\delta$.
This proof is summarized by the diagram below: 
  $$ \xymatrix@C=5pc{ 
  \grpol{L} \ar[d]_{m} & K \ar[l]_{l} \ar[d]^{d} \ar@/^3ex/[ddr]^{d'} &  \\ 
  \grpol{G}=(\grpol{L}+\ol{\grpol{L}})+_e\ti{\grpol{L}} & 
    \grpol{D}=(\grpol{K}+\ol{\grpol{L}})+_e\ti{\grpol{K}} 
            \ar[l]_{l_1=(l+\id_{\ol{\grpol{L}}})+_e\ti{l}} & \\ 
  && \grpol{D'}=(\grpol{K}+\ol{\grpol{K'}})+_e\ti{\grpol{K'}}
    \ar@<1ex>@/^3ex/[llu]^{l'_1=(l+\ol{l'})+_e\ti{l'}~~~~~~~~~~~~~~~~~~~~}
     \ar@<.5ex>[lu]_(.6){\;\delta=(\id_\grpol{K}+\ol{l'})+_e\ti{\delta}}\\ 
  }$$
\end{proof}

\begin{exmp}
\label{app-exam:pol}
Here is a PBC (on the left) and the FPBC (on the right) of given $m$ and $l$ in $\Grpol$.
$$  \begin{array}{|c|c|}
\hline 
 \xymatrix@=1pc{{n}^{\pm} & {p}^\pm \ar[l] \\ }  & 
 \xymatrix@=1pc{{n_1}^\pm & {p_1}^+ \ar[l] & {p_2}^-  \\ } \\
\hline 
 \rule[20pt]{0pt}{0pt}
 \xymatrix@=1pc{{n}^\pm \ar@/^/[r] & {p}^\pm \ar[l] \ar[d] \\ 
  & {q}^\pm \ar@/_/[u] \\ }  & 
 \xymatrix@=1pc{{n_1}^\pm \ar@/^3ex/[rr] & 
  {p_1}^+ \ar[l] \ar[d] \ar[dr] & {p_2}^-  \\ 
  & {q_1}^\pm  \ar[ur] & {q_2}^\pm  \ar[u] \\ }  \\  
\hline 
\end{array}
\qquad\qquad 
\begin{array}{|c|c|}
\hline 
 \xymatrix@=1pc{{n}^{\pm} & {p}^\pm \ar[l] \\ }  & 
 \xymatrix@=1pc{{n_1}^\pm & {p_1}^+ \ar[l] & {p_2}^-  \\ } \\
\hline 
 \rule[20pt]{0pt}{0pt}
 \xymatrix@=1pc{{n}^\pm \ar@/^/[r] & {p}^\pm \ar[l] \ar[d] \\ 
  & {q}^\pm \ar@/_/[u] \\ }  & 
 \xymatrix@=1pc{{n_1}^\pm \ar@/^3ex/[rr] & {p_1}^+ \ar[l] \ar[d] & {p_2}^-  \\ 
  & {q}^\pm  \ar[ur] \\ }  \\  
\hline 
\end{array}$$
\end{exmp}

\end{document}